\documentclass[10pt, twosides]{amsart}

\usepackage{defs, amsaddr}

\title[Discrete time Observer for Continuous time system]{Design of a discrete time observer for the continuous time rotation kinematics on $\SO3$}
\thanks{}

\author[S. Shanbhag]{Soham Shanbhag}
\address{Department of Mechanical Engineering\\ IIT Bombay, Powai\\ Mumbai 400076, India.\\
\url{https://sohamshanbhag.github.io}\\
}

\author[R. Banavar]{Ravi Banavar}
\address{Systems \& Control Engineering\\ IIT Bombay, Powai\\ Mumbai 400076, India.\\
\url{http://www.sc.iitb.ac.in/~banavar}\\%{http://www.sc.iitb.ac.in/~banavar}\\
}
\email{soham.shanbhag@gmail.com, banavar@iitb.ac.in}

\keywords{Discrete time observer on manifolds}

\toggletrue{arxiv}
%\togglefalse{arxiv}

\usepackage{import}
\usepackage{pdfpages}

\newcommand{%
    
    \import{./figures/}{.pdf_tex}
}[2]{%
    
    \import{./figures/}{#2.pdf_tex}
}

\begin{document}

	\begin{abstract}
		This report proposes a discrete time observer for the continuous time rigid body kinematics on the rotation group $\SO3$. The work draws on two research schools - one by Chang \cite{Chang_FI} based on feedback integrators for systems evolving on manifolds, and the other by Mahony \cite{Mahony}, who proposed an observer for attitude dynamics.
The discrete time observer is based on the modified dynamics of the Mahony observer for attitude dynamics, where the modification of the vector field enables numerical integration based on Euclidean schemes.

        \end{abstract}

	\maketitle

	\section{Introduction}
		\label{sec:intro}
		% INTRODUCTION
%
The special orthogonal group $\SO3$ (the group of rotations) finds wide applications in mechanical and aerospace engineering systems.
A very popular and widely used application is the quadrotor, where the state of the quadrotor is partially constituted by the orientation of a body frame fixed to the quadrotor with respect to a spatial frame.
To implement control laws, it is essential to have information (or knowledge) of the state of the dynamical system.
Often due to sensor limitations or noise, only part of the state vector is accessible, and the rest have to be estimated.
Furthermore, the system state data is only available at discrete instants of time since most implementation is digital and processing occurs in discrete time.
To distinguish terminology right from the outset, in this article the word {\it estimation} is used in the context where the system and measurements are corrupted with noise, while the word {\it observer} is referred to a situation where the system is noise-free and the state-estimate is being sought from an incomplete measurement of the state.

A large body of literature is available on estimators/observers in Euclidean spaces.
\cite{Respondek} designs asymptotic observers on $\R^3$ for a class of nonlinear systems.
One of the most famous algorithms in estimation theory, proposed in \cite{Kalman1960}, provides a recursive algorithm to estimate the state of the system assuming Gaussian noise.
There have been a lot of advancements in this field based on this theory.
A continuous time version of the filter was given by \cite{kalmanbucy}.
The unscented filter was given by \cite{Julier97anew}.
However, the Kalman filter and all its extensions assume that the state belongs to a Euclidean space, which may not always be the case in many engineering applications.

The configuration variables of many mechanical and aerospace systems, like serial link
robots, satellites, quadrotors, evolve on non-Euclidean spaces or smooth manifolds.
So it is essential to develop algorithms, in particular, in discrete time, for such systems.
Often, problems on manifolds are viewed locally in terms of charts,  which can be used to map open sets on the manifold to open sets in $\R^n$, and hence develop existing
or slightly modified  filters on the mapped Euclidean space.
However, this method has multiple disadvantages.
The chart maps are only locally defined, hence the filter equations need to be checked for continuity and differentiability at every boundary of the open set and further, the results
are not global.
An extension of the Kalman filter to Riemannian manifolds is given by \cite{Hauberg2013}.
In this this work the Riemannian metric is used to derive the unscented Kalman Filter
on the manifold.
Moreover, this method requires computation of the logarithmic map which requires solving an optimal control problem.
Another Kalman Filter proposed on Lie Groups is given by \cite{bourmaud:hal-00903252}.

The efforts towards constructing observers for systems evolving on manifolds begins with papers by Bonnabel(\cite{barrau:hal-01826025}, \cite{DBLP:journals/corr/BarrauB14}, \cite{DBLP:journals/corr/BarrauB13}, \cite{2007arXiv0707.2286B}) , Mahony(\cite{Mahony}, \cite{doi:10.1002/rnc.1638}, \cite{2008arXiv0805.0828L}, \cite{5717043}), Chirkjian(\cite{Park2008}, , Maithripala(\cite{MaiBerDay2004}).
These attempts have been followed by various authors in (\cite{2015arXiv150307178W}, \cite{4282363}, \cite{7798985}, \cite{IzaSamSanKum2015}). All these efforts have been focussed on synthesizing continuous time observers for continuous time dynamical systems on Lie groups.
However, the need for discrete time observers, cannot be understated, and further, discretization of dynamical systems over manifolds is not as straightforward.

Three observers for the special orthogonal group $\SO3$ were proposed by \cite{Mahony}.
A lot of similar estimators have been developed since then, such as \cite{7442107} and \cite{2016arXiv160606208B}.
This observer is important since it is widely used, presumably with Euler discretization and the observed values converge exponentially to the desired state.
However, attempts at designing discrete observers for continuous time dynamical systems on manifolds have been distinctly missing.
The importance of such an observer cannot be understated. Finally, algorithms are discrete, and measurements too are available in today's digital world in a discrete manner.
The procedure of discretization of observers synthesized in continuous time for systems evolving on manifolds would bring in numerical inaccuracies, since special integration schemes are required to solve such equations.
Usually, continuous time systems are converted to discrete time observers using discretization techniques like Euler Discretization and Runge-Kutta Method, with a suitably small step size.
However, since these methods do not respect the constraints imposed due to the manifold structure, discretization is performed by comparatively complex discretization techniques.
However, this is tedious.
Hence, the need for developing a discrete time observer for a continuous time system on a manifold cannot be understated.

	\section{Preliminaries}
		\label{sec:lit}
		% LITERATURE REVIEW{{{
%We see that manifolds have a certain structure which makes them more difficult to work, as far as integration schemes go,
%was compared to Euclidean spaces.
%The manifold is invariant with respect to the vector field defined on that manifold and hence the state %computed by the integration scheme should always lie on the manifold.
%
%
%A few observers have been developed on manifolds.
However, since most measurements are discrete, we are interested in designing a discrete time observer for the system \pref{eq:sys}.
A contribution in this field was given by \cite{Deza}, which proposes a discrete extended Kalman filter for a given continuous system.
However, this is developed on $\R^n$.
In this article, we are interested in constructing such discrete time observers for systems on manifolds.
A few results for estimators on manifolds are now presented.
\cite{bourmaud:hal-00903252} consider a discrete system
\begin{align}
	X_k = f(X_{k-1}, u_{k-1}, n_{k-1})
\end{align}
where the noise $n_k$ is a Gaussian on the Lie group.
They use the logarithmic and exponential map to design discrete-extended Kalman filters on these manifolds, which lead to the system being restricted to manifold.
Similarly, \cite{Hauberg2013} develops an unscented Kalman filter on a Riemannian manifold, where the author uses properties of the Riemannian metric, like the exponential and the logarithmic map to calculate the predict and update state.
Although this is a superior result to \cite{bourmaud:hal-00903252} due to being a better filter for nonlinear systems, it has a restriction that the manifold needs a Riemannian metric to be defined on it.
It also requires the calculation of the logarithm map to be calculated, which is computationally expensive, since calculating the logarithm map generally requires solving an optimal control problem.

Discretization techniques such as Euler-step, Runge-Kutta have lower computational complexity
and proven robustness. We intend to use these discretization techniques
by embedding the manifold in an ambient Euclidean space. To do so, however, we first need to modify the system dynamics such that the system trajectories are always attracted to the manifold.
Such a methodology is provided by the scheme of feedback integrators.
%}}}

\subsection{Feedback Integrators} % (fold)
\label{subsec:feedback_integrators}
% FEEDBACK INTEGRATORS{{{
Given a dynamical system on a manifold, any numerical integration scheme requires us to respect the manifold structure and the first integrals of the equations of motion.
However, during conventional discretization, a lot of these quantities are violated.
For example, if our system evolves on the unit sphere $S^1$, Euler discretization will not
ensure that the trajectory stays on $S^1$.
Hence, we cannot apply normal integration schemes directly on this system.
The authors in \cite{Chang_FI}  propose a change in the system dynamics such that the original dynamics are preserved on the manifold. However, if the state is not on the manifold, the manifold itself becomes an attractor to the system, which leads to the state trajectory converging to the manifold.
The results of the paper are summarised in Theorem~\pref{theo:FI}, which is found in Appendix \pref{app:FI}.
As we can see, the modified system in Equation~\pref{eq:modified_FI_system} is an equivalent system to the system in Equation~\pref{eq:FI_system} and evolves in the ambient Euclidean space.
Since we now have a system defined on $\R^n$, we can use existing theorems in the Euclidean space to design observers.
%}}}}}}

	\section{Modified Mahony observer in Euclidean space}
		\label{sec:cont_extension}
		% EXTENSION TO AMBIENT SPACE
% SYSTEM {{{
We consider the following kinematic system evolving on the rotation group  $\SO3$,
\begin{align}\label{eq:sys}
	\dot{R} = R\Omega_{\times} \quad R \in \SO3, \Omega \in \R^3
\end{align}
with continuous time measurements given by
\begin{subequations}\label{eq:measurement_cont}
\begin{align}
	R^y &= R\label{eq:measurement_cont_R}\\
	\Omega^y &= \Omega + b\label{eq:measurement_cont_Omega}
\end{align}
\end{subequations}
where $b$ is a constant bias and the superscript $y$ denotes that the variable is a measured quantity. Although the assumptions on the measurements seem to suggest that the filter has exact information on the state $R$, this, however, is not true since they are usually corrupted by noise.

The objective is to design a continuous time observer of the continuous time system \pref{eq:sys}  with measurements \pref{eq:measurement_cont} such that
\begin{align*}
	\lim_{t \to \infty} \hat{R}(t) = R(t), \quad \hat{R} \in \R^{3\times 3}
\end{align*}
where $\hat{R}(t)$ is the estimate to $R(t) \in \SO3$, based on Euclidean integration schemes.
%}}}

% STATING MAHONY{{{
The passive observer proposed by \cite{Mahony} is given by
\begin{subequations}\label{eq:mahony_passive_filter}
\begin{align}
	\dot{\hat{R}} &= \hat{R}\left( \Omega^y - \hat{b} + k_P\omega \right)_{\times}\label{eq:mahony_passive_filter_R}\\
	\dot{\hat{b}} &= -k_I\omega\label{eq:mahony_passive_filter_b}\\
	\omega &= vex(\mathbb{P}_a(\hat{R}^TR^y))\label{eq:mahony_passive_filter_omega}
\end{align}
\end{subequations}

We now state the theorem for the convergence of the filter from \cite{Mahony}.
\begin{theorem}\label{theo:mahony}
    Consider the rotation kinematics \pref{eq:sys} and with measurements given by \pref{eq:measurement_cont}.
    Let $(\hat{R}(t), \hat{b}(t))$ denote the solution of the system \pref{eq:mahony_passive_filter}.
    Define the error variable $\tilde{R} = \hat{R}^TR$ and $\tilde{b} = b - \hat{b}$.
    Assume that $\Omega(t)$ is a bounded, absolutely continuous signal and that the pair of signals $(\Omega(t), \tilde{R})$ are asymptotically independent.
    Define $\mathbb{U}_0 \subset \SO3 \times \R^3$ by
	\begin{align}
		\mathbb{U}_0 = \left\{ (\tilde{R}, \tilde{b}) \mid \tr(\tilde{R}) = -1, \tilde{b} = 0 \right\}.
	\end{align}
	Then:
	\begin{enumerate}
		\item The set $\mathbb{U}_0$ is forward invariant and unstable with respect to the dynamic system \pref{eq:mahony_passive_filter}.
		\item The error $(\tilde{R}(t), \tilde{b}(t))$ is locally exponentially stable to $(I, 0)$.
		\item For almost all initial conditions $(\tilde{R}_0, \tilde{b}_0) \notin \mathbb{U}_0$ the trajectory $(\hat{R}(t), \hat{b}(t))$ converges to the trajectory $(R(t), b)$.
	\end{enumerate}
\end{theorem}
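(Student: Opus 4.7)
The plan is to work with the closed-loop error $(\tilde R, \tilde b)$ on $\SO3 \times \R^3$, exhibit a Lyapunov function whose time derivative depends only on the innovation $\omega$, and then separate the three claims via LaSalle's invariance principle, a linearization at $(I,0)$, and a Chetaev-type argument on $\mathbb{U}_0$.

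First I would derive the error dynamics. Differentiating $\tilde R = \hat R^T R$, using $\tilde R\,\Omega_\times = (\tilde R\,\Omega)_\times\, \tilde R$, and substituting $\Omega^y = \Omega + b$ with $\tilde b = b - \hat b$, one obtains
\begin{align*}
    \dot{\tilde R} = \tilde R\,\Omega_\times - (\Omega + \tilde b + k_P\omega)_\times\,\tilde R, \qquad \dot{\tilde b} = k_I\,\omega.
\end{align*}
Taking the Mahony candidate $V(\tilde R,\tilde b) = \tfrac12 \tr(I - \tilde R) + \tfrac{1}{2k_I}|\tilde b|^2$, the $\Omega_\times\,\tilde R$ contributions in $\tr(\dot{\tilde R})$ cancel against each other, and applying the identity $\tr(v_\times M) = -2 v^\top \text{vex}(\mathbb{P}_a(M))$ with $\omega = \text{vex}(\mathbb{P}_a(\tilde R))$ to what remains yields $\tr(\dot{\tilde R}) = 2\tilde b^\top \omega + 2 k_P|\omega|^2$. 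Combined with $\dot{\tilde b} = k_I\omega$ the bias cross-term vanishes and I get $\dot V = -k_P|\omega|^2 \le 0$.

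For claim~(1), $\{\dot V = 0\} = \{\mathbb P_a(\tilde R) = 0\}$ forces $\tilde R$ symmetric, hence $\tilde R = I$ or $\tr(\tilde R) = -1$ on $\SO3$; imposing forward invariance on this locus and invoking the asymptotic independence of $(\Omega,\tilde R)$ as a persistence-of-excitation condition that prevents the bias from being absorbed into the angular rate yields $\tilde b\to 0$ on the invariant set, so the largest invariant subset of $\{\dot V=0\}$ is $\{(I,0)\} \cup \mathbb{U}_0$, giving forward invariance of $\mathbb{U}_0$. Instability of $\mathbb{U}_0$ follows by a Chetaev argument: $V$ restricted to $\SO3 \times \{0\}$ attains its global maximum precisely on $\mathbb{U}_0$, so any transverse perturbation makes $\omega\neq 0$ and hence $\dot V<0$, preventing trajectories from staying near $\mathbb{U}_0$. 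For claim~(2), passing to exponential coordinates $\tilde R = \exp(x_\times)$ near $I$ gives $\omega = x + O(|x|^2)$ and the linearized cascade
\begin{align*}
    \dot x = -k_P\, x - \tilde b - \Omega(t) \times x, \qquad \dot{\tilde b} = k_I\, x,
\end{align*}
whose quadratic $V_1 = \tfrac12|x|^2 + \tfrac{1}{2k_I}|\tilde b|^2$ satisfies $\dot V_1 = -k_P|x|^2$; under the PE condition supplied by asymptotic independence, standard results for cascaded time-varying linear systems give local exponential stability of $(I,0)$. Claim~(3) then combines non-increase of $V$ (bounded trajectories), the measure-zero stable set of the unstable $\mathbb{U}_0$, and local exponential attraction of $(I,0)$ to conclude that almost every trajectory asymptotes to $(I,0)$, i.e.\ $(\hat R,\hat b)\to(R,b)$.

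The main obstacle is the invariant-set analysis underlying claim~(1): extracting $\tilde b \to 0$ rigorously on the two-dimensional locus $\{\omega=0\}\setminus\{\tilde R=I\}$ from the asymptotic independence hypothesis, and constructing a Chetaev function that works uniformly over the manifold $\mathbb{U}_0$ of $\pi$-rotations. The trace/Lyapunov computation is essentially bookkeeping with the $(\cdot)_\times$, $\text{vex}$, $\mathbb{P}_a$ identities, and the local linearization is routine once the PE condition is invoked.
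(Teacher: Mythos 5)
The paper does not actually prove this theorem: it is quoted from \cite{Mahony} with the proof omitted (hence the $\Box$ immediately after the statement), and the paper's own Lyapunov analysis appears only in the proof of the subsequent extended theorem. Your outline is essentially that standard argument --- the same error dynamics, the same Lyapunov function $\tfrac12\tr(I-\tilde R)+\tfrac1{2k_I}\|\tilde b\|^2$ with $\dot V=-k_P\|\omega\|^2$, the same invariant-set analysis on $\{\mathbb{P}_a(\tilde R)=0\}$ using asymptotic independence, and the same linearisation at $(I,0)$ --- so in approach it matches both the cited source and the paper's proof of the modified observer. Two steps as sketched would not survive scrutiny. First, the Chetaev argument for instability of $\mathbb{U}_0$ fails as stated: $V$ is maximal on $\mathbb{U}_0$ only within the slice $\{\tilde b=0\}$, and a perturbation with $\tilde b\neq 0$ raises $V$ above its value $2$ on $\mathbb{U}_0$, so monotone decrease of $V$ does not by itself expel trajectories from a neighbourhood of $\mathbb{U}_0$; the standard fix (and Mahony's actual proof) linearises the error dynamics about points of $\mathbb{U}_0$ and exhibits an eigenvalue with positive real part. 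Second, $\dot V_1=-k_P\|x\|^2$ is only negative semidefinite, so local exponential stability does not follow from $V_1$ alone; one needs the cross-term quadratic form (the $P$, $Q$ matrices the paper constructs for the extended observer, which use only boundedness of $\Omega$, not persistence of excitation), and your appeal to ``standard results'' should be replaced by that explicit construction.
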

\hfill $\Box$

The following comments are in order:
\begin{itemize}
	\item The term $\hat{R}^TR^y$ is the error in the estimate, and the last equation constructs $\omega \in \R^3$ is based on a measure of this error.
	\item The second equation constructs an estimate, $\hat {b}$, of the bias, based on an integral term involving $\omega$.
	\item The first equation incorporates the estimate of the bias $\hat {b}$, the vector $\omega$ and the measurement $\Omega^y$ into an estimate for $R$.
\end{itemize}
However, since this observer evolves on $\SO3$, discrete implementation using Euler discretization of the observer dynamics may lead to the estimate deviating from the manifold itself. Here is where our contribution comes in.
To correct this deviation, and employ conventional Euler integrators to implement the observer, we adopt a recently proposed technique termed {\it feedback integrators} \cite{Chang_FI}.
The idea is explained in Appendix \pref{app:FI}.%}}}

% USING FI ON MAHONY{{{
The ideas discussed are now implemented for the observer structure we have.
The potential-like function that appears in \cite{Chang_FI} to be defined and added to the dynamics is of the form $V = \frac12 k_e \| \hat{R}^T\hat{R} - I \|^2$.
This additional term to the dynamics $\nabla V$ satisfies all the
required conditions as stated in \cite{Chang_FI}.

\begin{theorem}
	Consider the rotation kinematics (\ref{eq:sys}) with measurements given by (\ref{eq:measurement_cont}).
	Let $(\hat{R}(t), \hat{b}(t)) \in \R^{3\times 3}\times \R^3$ denote the solution of
	\begin{subequations}\label{eq:extended_mahony}
	\begin{align}
		\dot{\hat{R}} &= \hat{R}(\Omega^y - \hat{b} + k_p\omega)_\times - k_e\hat{R}(\hat{R}^T\hat{R} - I), & \hat{R}(0) & = \hat{R}_0\\
		\dot{\hat{b}} &= - k_I \omega, & \hat{b}(0) & = \hat{b}_0\\
		\omega &= vex(\mathbb{P}_a(\hat{R}^TR^y))
	\end{align}
	\end{subequations}
	Define the error in the estimates of $R$ and $b$ as $\tilde{R} = \hat{R}^TR$ and $\tilde{b} = b - \hat{b}$.
	Assume that $\Omega(t)$ is a bounded, absolutely continuous signal and that the pair of signals $(\Omega(t), \tilde{R})$ are asymptotically independent.
	Define $\mathbb{U}_0 \subset \SO3 \times \R^3$ by
	\begin{align*}
		\mathbb{U}_0 = \left\{ (\tilde{R}, \tilde{b}) \mid \tilde{R} \in \SO3, \tr(\tilde{R}) = -1, \tilde{b} = 0 \right\}
	\end{align*}
	Then:
	\begin{enumerate}
		\item The set $\mathbb{U}_0$ is forward invariant and unstable with respect to the dynamic system (\ref{eq:extended_mahony}).
		\item The error $(\tilde{R}(t), \tilde{b}(t))$ is locally exponentially stable to $(I, 0)$.
		\item For almost all initial conditions $(\tilde{R}_0, \tilde{b}_0) \notin \mathbb{U}_0$ the trajectory $(\hat{R}(t), \hat{b}(t))$ converges to the trajectory $(R(t), b)$.
	\end{enumerate}
\end{theorem}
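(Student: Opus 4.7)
The overall plan is to exploit the fact that the extra term $-k_e \hat{R}(\hat{R}^T\hat{R}-I)$ introduced by the feedback integrator vanishes identically on $\SO3$, so that the modified dynamics \pref{eq:extended_mahony} restricted to $\SO3 \times \R^3$ coincide pointwise with the Mahony passive dynamics \pref{eq:mahony_passive_filter}. Once I show that $\SO3$ is forward invariant and exponentially attractive for the modified flow in the ambient space $\R^{3\times 3}$, each of the three conclusions can be inherited from Theorem~\pref{theo:mahony}.

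The first step is to verify the hypotheses of Theorem~\pref{theo:FI} for the Chang-type potential $V(\hat{R}) = \tfrac{1}{2} k_e\|\hat{R}^T\hat{R}-I\|^2$. A direct computation gives the Euclidean (Frobenius) gradient $\nabla V(\hat{R}) = 2k_e\hat{R}(\hat{R}^T\hat{R}-I)$, which, up to a factor that can be absorbed into $k_e$, is exactly the correction term appended in \pref{eq:extended_mahony}. One then checks that $V \geq 0$ with equality on $O(3)$, that $V$ is proper on a tubular neighbourhood of $\SO3$, and that the Hessian of $V$ restricted to directions normal to $\SO3$ is positive definite. Theorem~\pref{theo:FI} then yields a neighbourhood of $\SO3$ in $\R^{3\times 3}$ that is forward invariant for \pref{eq:extended_mahony} and on which $\hat{R}(t)$ converges exponentially to $\SO3$; moreover, any initial condition already on $\SO3$ produces a trajectory that remains on $\SO3$ for all time.

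With the manifold established as an invariant attractor, item (1) is immediate: any trajectory starting in $\mathbb{U}_0 \subset \SO3\times\R^3$ keeps the correction term zero for all time, so \pref{eq:extended_mahony} reduces to \pref{eq:mahony_passive_filter} along it, and both the forward invariance and instability of $\mathbb{U}_0$ transfer directly from Theorem~\pref{theo:mahony}. For item (2), I would linearise \pref{eq:extended_mahony} at the equilibrium $(\hat{R},\hat{b}) = (R,b)$ in the ambient space $\R^{3\times 3}\times \R^3$. The directions tangent to $\SO3 \times \R^3$ reproduce the Mahony linearisation, which is Hurwitz by Theorem~\pref{theo:mahony}, while the directions normal to $\SO3$ pick up a block contributed solely by the Hessian of $V$, which is Hurwitz with decay rate proportional to $k_e$. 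The combined Jacobian is block-triangular with Hurwitz diagonal blocks, giving local exponential stability of $(I,0)$.

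Item (3) is obtained by concatenating the two convergences: the Lyapunov-type estimate from Theorem~\pref{theo:FI} drives $\hat{R}(t)$ exponentially to $\SO3$, after which the on-manifold dynamics coincide with the Mahony passive observer and item (3) of Theorem~\pref{theo:mahony} applies. The main obstacle I anticipate lies in making this concatenation rigorous, since Mahony's "almost all initial conditions" statement is about trajectories living strictly on the manifold, whereas our trajectories only approach $\SO3$ asymptotically. I would treat the transverse quantity $\|\hat{R}^T\hat{R}-I\|$, which decays exponentially, as a vanishing perturbation entering the tangential Mahony-like dynamics, and invoke a standard cascade/ISS argument to transfer the almost-global attractivity of $(I,0)$ to the full ambient system, with the excluded set of initial conditions being precisely the stable manifold of the lifted copy of $\mathbb{U}_0$.
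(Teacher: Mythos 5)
Your overall architecture (use the Chang potential to make $\SO3$ attractive, then lean on the Mahony analysis) matches the paper's in spirit, and your treatment of items (1) and (2) is essentially the paper's: the paper likewise notes that the correction term vanishes on the manifold, and its linearisation splits into the symmetric part $\dot{s} = [s,\Omega_\times] - 2k_e s$ (your ``normal block,'' decaying at a rate set by $k_e$) and the standard Mahony $(a,y)$ block, exactly the block-triangular structure you describe. However, for item (3) your route diverges from the paper's and contains the one genuine gap — which you yourself flag but do not close. You cannot ``apply Theorem~\pref{theo:mahony} on the manifold'' because the trajectory never reaches $\SO3$ in finite time, and a ``standard cascade/ISS argument'' does not transfer \emph{almost-global} attractivity across a vanishing perturbation: almost-global convergence is not a robust property, since the exceptional set (the stable manifold of the lifted $\mathbb{U}_0$) is not uniformly repelling, and arbitrarily small persistent perturbations of the tangential dynamics can in principle obstruct convergence or fatten the exceptional set. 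Reduction theorems of Seibert--Florio type give you asymptotic stability of a compact attractor through an attractive invariant manifold, but not the ``for almost all initial conditions'' statement, so this step needs an actual argument, not a citation to a standard one.

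The paper avoids this by never invoking the cascade: it works directly with the ambient-space error dynamics, takes the single Lyapunov function $V = \tfrac14\|I-\tilde{R}\|^2 + \tfrac{1}{2k_I}\|\tilde{b}\|^2$, computes $\dot V$ \emph{including} the $k_e$-terms, and bounds those extra terms by Cauchy--Schwarz using the fact that after a finite time $\|\tilde{R}\tilde{R}^T - I\|^2 < \epsilon$; it then runs a LaSalle-type argument on the full system, showing every invariant set of the error dynamics lies in $\SO3$ and is either $(I,0)$ or contained in $\mathbb{U}_0$. If you want to keep your modular structure, the honest fix is to replace the ISS hand-wave with exactly such a LaSalle/invariance argument on the combined system (or a quantitative estimate showing the off-manifold terms in $\dot V$ are dominated by $-k_P\|\omega\|^2$ once $\hat{R}$ is $\epsilon$-close to $\SO3$). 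One further small inaccuracy: Theorem~\pref{theo:FI} gives asymptotic, not exponential, convergence to $V^{-1}(0)$, and $V^{-1}(0)$ is $O(3)$ rather than $\SO3$, so you also need a determinant/connectedness remark to stay in the right component.
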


\begin{proof}%{{{
	We first derive the error dynamics of the observer system.
	Differentiating $\tilde{R} = \hat{R}^TR$,
	\begin{align*}
		\dot{\tilde{R}} &= \hat{R}^T\dot{R} +\dot{\hat{R}}^TR \\
		&= [\tilde{R}, \Omega_\times] - k_P\omega_\times\tilde{R} - \tilde{b}_\times\tilde{R} - k_e(\tilde{R}\tilde{R}^T - I)\tilde{R}
	\end{align*}
	where we have used the measurements as specified in equations (\ref{eq:measurement_cont}). We also have
	\begin{align*}
		\dot{\tilde{b}} = k_I\omega
	\end{align*}

	Hence, the estimation error system is
	\begin{subequations}\label{eq:extended_mahony_error}
	\begin{align}
		\dot{\tilde{R}} &= [\tilde{R}, \Omega_\times] - k_P\omega_\times\tilde{R} - \tilde{b}_\times\tilde{R} - k_e(\tilde{R}\tilde{R}^T - I)\tilde{R}\\
		\dot{\tilde{b}} &= k_I\omega\\
		\omega &= vex(\mathbb{P}_a(\hat{R}^TR^y))
	\end{align}
	\end{subequations}

	{\bf Step 1}: We first consider the convergence of the system from $\R^{3 \times 3}$
	the ambient Euclidean space)  to $\SO3$ (the manifold).

	Consider the function
	\begin{align}
		V_1 &= \|\hat{R}^T\hat{R} - I\|^2, \quad \hat{R} \in \R^{3 \times 3}\label{eq:proof_V1_Rhat}\\
		&= \tr((\hat{R}^T\hat{R} - I)^T(\hat{R}^T\hat{R} - I)) = \tr(\hat{R}^T\hat{R}\hat{R}^T\hat{R} - 2\hat{R}^T\hat{R} + I)\nonumber
	\end{align}

	Differentiating the above function and using assumptions from \cite{Chang_FI}, we have
	\begin{align}
		\frac{dV_1}{dt} = -k_e\|\hat{R}(\hat{R}^T\hat{R} - I)\|^2\label{eq:proof_V1dot}
	\end{align}

	We note that $V_1$ can also be written as
	\begin{align}
		V_1 = \|\tilde{R}\tilde{R}^T - I\|^2, \quad \tilde{R} \in \R^{3 \times 3}\label{eq:proof_V1_Rtilde}
	\end{align}

	Notice that $\|\hat{R}(\hat{R}^T\hat{R} - I)\| > 0 \, ~ \forall ~ \hat{R} \notin \SO3$, hence the derivative will be negative whenever $\hat{R} \notin \SO3$.
	This implies that given an $\epsilon > 0$, $\exists~ T > 0$, such that for all $t > T$,
	\begin{align}
		&\|\hat{R}^T(t)\hat{R}(t) - I\|^2 < \epsilon\\
		\Rightarrow ~ &\|\tilde{R}(t)\tilde{R}^T(t) - I\|^2 < \epsilon\label{eq:norm_RRI_limit}
	\end{align}

	{\bf Step 2}: Before proving the convergence of the observed state to the actual
	state, we prove some intermediate results that are used later in the proof.

	Define the inner product between two elements of $\R^{3\times 3}$ as
	\begin{align*}
		\langle A,B \rangle &\deff \tr(A^TB)\\
		\Rightarrow \|A\| &= \sqrt{\tr(A^TA)}
	\end{align*}

	Using the Cauchy Schwarz Inequality on this inner product, we have
	\begin{align*}
		& |\tr(\tilde{R}^T(t)\tilde{R}(t) - I)| = \langle\tilde{R}^T(t)\tilde{R}(t) - I, I\rangle \leq \|I\|\|\tilde{R}^T(t)\tilde{R}(t) - I\|\\
		\Rightarrow ~& |\tr(\tilde{R}^T(t)\tilde{R}(t) - I)| \leq \sqrt{3\epsilon} \Rightarrow ~ 3 - \sqrt{3\epsilon} \leq \tr(\tilde{R}^T(t)\tilde{R}(t)) \leq 3 + \sqrt{3\epsilon}\\
		\Rightarrow ~& \|\tilde{R}(t)\| \leq \sqrt{3 + \sqrt{3\epsilon}} \Rightarrow ~ \|\tilde{R}(t)\| \leq \sqrt{3}\numberthis\label{eq:norm_R_limit}
	\end{align*}
	which is valid for small $\epsilon $.
	Again using the Cauchy Schwarz Inequality, we have,
	\begin{align}
		|\tr\left((\tilde{R}\tilde{R}^T - I)\tilde{R}\right)| \leq \|\tilde{R}\tilde{R}^T - I\|\|\tilde{R}\| \Rightarrow |\tr\left((\tilde{R}\tilde{R}^T - I)\tilde{R}\right)| \leq \sqrt{\epsilon}\sqrt{3} = \sqrt{3\epsilon}\label{eq:tr_R_RRI_limit}
	\end{align}

	{\bf Step 3}: We now prove the convergence of the observer to the true values.

	To prove that the estimates computed by (\ref{eq:extended_mahony}) converge to the true values, we choose the Lyapunov function
	\begin{align}
		V = \frac14 \|I_3 - \tilde{R}\|^2 + \frac{1}{2k_I}\|\tilde{b}\|^2
	\end{align}

	Using the fact that $\|A\|^2 = \tr(A^TA)$, we differentiate the above equation to obtain,
	\begin{align*}
		\frac{dV}{dt} &= \frac12 \tr\left( (\tilde{R}^T - I)\frac{d\tilde{R}}{dt} \right) + \frac{1}{k_I}\tilde{b}^T\frac{d\tilde{b}}{dt}\nonumber\\
		&= \frac12 \tr\left((\tilde{R}^T - I)([\tilde{R}, \Omega_\times] - k_P\omega_\times\tilde{R} - \tilde{b}_\times\tilde{R} - k_e(\tilde{R}\tilde{R}^T - I)\tilde{R})\right) + \frac{1}{k_I}\tilde{b}^T\frac{d\tilde{b}}{dt}\nonumber\\
		&= \frac12 \tr\left(([\tilde{R}, \Omega_\times] - k_P\omega_\times\tilde{R} - \tilde{b}_\times\tilde{R} - k_e(\tilde{R}\tilde{R}^T - I)\tilde{R})\tilde{R}^T\right)\nonumber \\&\quad\ -\frac12\tr\left([\tilde{R}, \Omega_\times] - k_P\omega_\times\tilde{R} - \tilde{b}_\times\tilde{R} - k_e(\tilde{R}\tilde{R}^T - I)\tilde{R}\right) + \frac{1}{k_I}\tilde{b}^T\frac{d\tilde{b}}{dt}\\
		&= \frac12 \tr\left(\tilde{R}\Omega_\times\tilde{R}^T - \Omega_\times\tilde{R}\tilde{R}^T - k_P\omega_\times\tilde{R}\tilde{R}^T - \tilde{b}_\times\tilde{R}\tilde{R}^T - k_e(\tilde{R}\tilde{R}^T - I)\tilde{R}\tilde{R}^T\right)\nonumber \\&\quad -\frac12\left(\tr\left([\tilde{R}, \Omega_\times]\right) - k_P\tr\left(\omega_\times\tilde{R}\right) - \tr\left(\tilde{b}_\times\tilde{R}\right) - k_e\tr\left((\tilde{R}\tilde{R}^T - I)\tilde{R}\right)\right) +  \frac{1}{k_I}\tilde{b}^T\frac{d\tilde{b}}{dt}\\
		&= \frac12 \tr\left(\tilde{R}\Omega_\times\tilde{R}^T\right) -\frac12 \tr\left(\Omega_\times\tilde{R}\tilde{R}^T\right) - \frac12\tr\left(k_P\omega_\times\tilde{R}\tilde{R}^T\right) - \frac12\tr\left(\tilde{b}_\times\tilde{R}\tilde{R}^T\right)\nonumber \\&\quad - \frac12\tr\left(k_e(\tilde{R}\tilde{R}^T - I)\tilde{R}\tilde{R}^T\right) - k_P\langle\omega, vex(\mathbb{P}_a(\tilde{R}))\rangle - \langle\tilde{b}, vex(\mathbb{P}_a(\tilde{R}))\rangle \nonumber \\&\quad + \frac{k_e}{2}\tr\left((\tilde{R}\tilde{R}^T - I)\tilde{R}\right) -  \frac{1}{k_I}\langle\tilde{b},\dot{\hat{b}}\rangle\\
		&= \frac12 \tr\left(\Omega_\times\tilde{R}^T\tilde{R}\right) -\frac12 \tr\left(\Omega_\times\tilde{R}\tilde{R}^T\right) - \frac12\tr\left(k_P\omega_\times\tilde{R}\tilde{R}^T\right) - \frac12\tr\left(\tilde{b}_\times\tilde{R}\tilde{R}^T\right)\nonumber \\&\quad - \frac12\tr\left(k_e(\tilde{R}\tilde{R}^T - I)\tilde{R}\tilde{R}^T\right) +  \frac{k_e}{2}\tr\left((\tilde{R}\tilde{R}^T - I)\tilde{R}\right) - k_P\|\omega\|^2\\
		&= - \frac{k_e}{2}\left(\tr\left((\tilde{R}\tilde{R}^T - I)^2\right) + \tr\left(\tilde{R}\tilde{R}^T - I\right)\right) +  \frac{k_e}{2}\tr\left((\tilde{R}\tilde{R}^T - I)\tilde{R}\right) - k_P\|\omega\|^2\\
		&= - \frac{k_e}{2}\left(\tr\left((\tilde{R}\tilde{R}^T - I)^T(\tilde{R}\tilde{R}^T - I)\right) + \tr\left(\tilde{R}\tilde{R}^T - I\right)\right) +  \frac{k_e}{2}\tr\left((\tilde{R}\tilde{R}^T - I)\tilde{R}\right ) \nonumber \\&\quad- k_P\|\omega\|^2\\
		&= - \frac{k_e}{2}\left(\|\tilde{R}\tilde{R}^T - I\|^2 + \tr\left(\tilde{R}\tilde{R}^T - I\right)\right) +  \frac{k_e}{2}\tr\left((\tilde{R}\tilde{R}^T - I)\tilde{R}\right) - k_P\|\omega\|^2\numberthis\label{eq:dvdt_extended_mahony}
	\end{align*}

	Substituting (\ref{eq:norm_RRI_limit}), (\ref{eq:norm_R_limit}) and (\ref{eq:tr_R_RRI_limit}) in (\ref{eq:dvdt_extended_mahony}), we have
	\begin{align*}
		-\frac{k_e}{2}\sqrt{3\epsilon} - \frac{k_e}{2}\epsilon - \frac{k_e}{2}\sqrt{3\epsilon} - k_P\|\omega\|^2 \leq \frac{dV}{dt} \leq & \frac{k_e}{2}\sqrt{3\epsilon} - \frac{k_e}{2}\epsilon + \frac{k_e}{2}\sqrt{3\epsilon} - k_P\|\omega\|^2\\
		\Rightarrow \frac{dV}{dt} \leq k_e\sqrt{3\epsilon} - \frac{k_e}{2}\epsilon - k_P\|\omega\|^2\label{eq:dvdt_extended_mahony_epsilon}\numberthis
	\end{align*}
	The term $\| \omega \| = \| vex (\mathbb{P}_a(\hat{R}^TR^y)) \| $ is non-zero due to the error between the estimate and the measured value of the system being non-zero. Hence, for reasonably small $\epsilon > 0$  and a suitably large value of $k_P$, the right hand side of the inequality is negative.Hence, the system is asymptotically stable.
	Since $\epsilon$ can be chosen arbitrarily small, we have that $V \to 0$.

	The rest of the proof follows on the same lines as \cite{Mahony}. The earlier part has shown that no invariant set of the error dynamics (\ref{eq:extended_mahony_error}) can lie in $\R^{3\times 3} \setminus \SO3$ due to the system dynamics asymptotically converging to the manifold. Hence, any invariant set of the error dynamics must lie in $\SO3$.
	Hence, the equilibrium points or limit sets of the system, if any, lie completely in $\SO3$.
	Since $\tilde{R} \in \SO3 \Rightarrow \exists~ P, B$ such that $\tilde{R} = PBP^{-1}$ where
	\begin{align*}
		B = \begin{bmatrix}
			1 & 0 & 0\\
			0 & \cos\theta & \sin\theta\\
			0 & -\sin\theta & \cos\theta
		\end{bmatrix}
	\end{align*}

	Since $\tilde{R} \in \SO3$, and that $\|\mathbb{P}_a(\tilde{R})\| = \sqrt{2}\sin\theta$, we have $\omega = 0$ implies either $\theta = 0$ or $\theta = \pi$. $\theta = 0$ implies $\tilde{R} = I$ and $\theta = \pi$ implies $\tr(\tilde{R}) = -1$.
	If $\tilde{R} = I$, we have $\tilde{b} = 0$ as the equilibrium point from the system dynamics.

	The terms $\tr(\tilde{R}) = -1, \tilde{R} \in \SO3$ reduce the error dynamics (\ref{eq:extended_mahony_error}) to
	\begin{subequations}\label{eq:extended_mahony_reduced_U0}
	\begin{align}
		\dot{\tilde{R}} & = [\tilde{R}, \Omega_\times] - \tilde{b}_\times\tilde{R} &
		\dot{\tilde{b}}  = 0
	\end{align}
	\end{subequations}

	Differentiating $\mathbb{P}_a(\tilde{R}) = 0$, we get $\mathbb{P}_a(\tilde{b}_\times\tilde{R}) = 0$.
	Let $\mathbb{U} = \{(\tilde{R}, \tilde{b}) \mid \tilde{R} \in \SO3, \tr(\tilde{R}) = -1, \mathbb{P}_a(\tilde{b}_\times\tilde{R}) = 0\}$.

	We prove by contradiction that $\mathbb{U}_0 \subset \mathbb{U}$ is the largest forward invariant set of the closed loop dynamics \ref{eq:extended_mahony_error}.
	The solution for the reduced error dynamics is
	\begin{align}
		\tilde{R}(t) = \exp\left(-\int_0^t\Omega_\times \textrm{d}t\right)\tilde{R}_0\exp\left(\int_0^t\Omega_\times \textrm{d}t\right)
	\end{align}

	where $\tilde{R}_0 = \tilde{R}(0) \in \mathbb{U}_0$. Since $\tilde{R}_0 \in \SO3$, $\tilde{R} \in \SO3$ due to orthogonality of $\exp(\int_0^t\Omega_\times \dt)$.
	Hence, $\mathbb{U}_0$ is forward invariant.

	Assume that there exists $(\tilde{R}_0, \tilde{b}_0) \in \mathbb{U} - \mathbb{U}_0$ such that $(\tilde{R}(t), \tilde{b}(t)) \in \mathbb{U} ~ \forall ~ t > 0$.
	We have $\mathbb{P}_a(\tilde{b}_\times\tilde{R}) = 0$ on this trajectory. Differentiating,
	\begin{align*}
		\frac{d}{dt}\mathbb{P}_a(\tilde{b}_\times\tilde{R}) & = \mathbb{P}_a(\tilde{b}_\times[\tilde{R}, \Omega_\times]) - \mathbb{P}_a(\tilde{b}_\times\tilde{b}_\times\tilde{R})\\
		& = \mathbb{P}_a(\tilde{b}_\times[\tilde{R}, \Omega_\times]) - \mathbb{P}_a(\tilde{b}_\times\tilde{R}\tilde{b}^T_\times)\\
		& = \mathbb{P}_a(\tilde{b}_\times[\tilde{R}, \Omega_\times])\\
		& = -\frac12 \left((\tilde{b}\times\Omega)_\times\tilde{R} + \tilde{R}(\tilde{b}\times\Omega)_\times\right) = 0\numberthis\label{eq:proof_U_0_largest}
	\end{align*}
	where we use the fact that $\mathbb{P}_a(\tilde{b}_\times\tilde{R}) = 0$.
	Since $(\Omega(t), \tilde{R}(t))$ are asymptotically independent, the equation (\ref{eq:proof_U_0_largest}) must be degenerate.
	This implies that there exists a time $T$ such that for all $t > T, \tilde{b}(t) = 0$, which implies that $\tilde{b} = 0 ~ \forall ~ t > 0$. This contradicts the assumption.
	Hence, $\mathbb{U}_0$ is the largest invariant set of the dynamics.

	{\it Local exponential convergence}: To prove local exponential convergence, consider the linearisation of the system dynamics about $(I, 0)$.
	Assume
	\begin{align*}
		\tilde{R} &= I + s + a_\times\\
		\tilde{b} &= -y
	\end{align*}
	where s is symmetric. This yields the linearisation of the error dynamics (\ref{eq:extended_mahony_error}) as
	\begin{subequations}\label{eq:extended_mahony_linear}
	\begin{align}
		\frac{d}{dt} \begin{pmatrix} a\\ y\end{pmatrix} &= \begin{pmatrix}
			-k_pI - \Omega(t)_\times & I\\
			-k_II & 0
		\end{pmatrix} \begin{pmatrix} a\\ y\end{pmatrix}\label{eq:ext_mah_err_lin_mah}\\
		\dot{s} &= [s, \Omega_\times] - 2k_es
	\end{align}
	\end{subequations}

	To show local exponential convergence of $s$ to $0$, we substitute the linearisation in equations (\ref{eq:proof_V1_Rtilde}) and (\ref{eq:proof_V1dot}) noting the fact that $\hat{R}^T\hat{R} = \tilde{R}\tilde{R}^T$ and
	\begin{align*}
		V_1 & = \| s \|^2\\
		\frac{d}{dt} V_1 & = -k_e\|s\|^2
	\end{align*}
	Hence, $V_1 \to 0$ exponentially and $s \to 0$ exponentially.
	For proof of convergence of $(a,y)$ to $(0,0)$, let $|\Omega_{max}|$ denote the maximum value attainable by $\Omega$ and choose
	\begin{align*}
		\alpha_2 > 0, \quad \alpha_1 > \frac{\alpha_2(|\Omega_{max}|^2 + k_I)}{k_p}, \\
		\frac{\alpha_1 + k_p\alpha_2}{k_I} < \alpha_3 < \frac{\alpha_1 + k_p\alpha_2}{k_I} + \frac{|\Omega_{max}|\alpha_2}{k_I}
	\end{align*}
	such that the matrices
	\begin{align*}
		P = \begin{pmatrix} \alpha_1 I & \alpha_2 I \\ -\alpha_2 I & \alpha_3 I \end{pmatrix}, \;
		Q = \begin{pmatrix} k_p\alpha_1 - \alpha_2 k_I & \alpha_2 |\Omega_{max}| \\ -\alpha_2 |\Omega_{max}| & \alpha_2 \end{pmatrix}
	\end{align*}
	are positive definite.
	Consider the cost function $W = \frac12 \xi^T P \xi$, with $\xi = (a,y)^T$.
	Differentiating $W$,
	\begin{align*}
		\dot{W} = -(k_p \alpha_1 - \alpha_2 k_I)\|a\|^2 - \alpha_2\|y\|^2 + y^Ta(\alpha_1 + k_p \alpha_2 - \alpha_3 k_I) + \alpha_2y^T(\Omega \times a)
	\end{align*}
	which leads to
	\begin{align*}
		\frac{d}{dt}\left(\xi^T P \xi\right) \leq -2(\|a\|, \|y\|) Q \begin{pmatrix} \|a\| \\ \|y\| \end{pmatrix}
	\end{align*}

	Hence, the observer system is locally exponentially stable.

\end{proof}%}}}

	\section{Discretised Observer}
		\label{sec:disc_obs}
		% DISCRETIZATION
We now consider the same kinematic system \pref{eq:sys} as before but with discrete measurements, and the measured variables being
\begin{subequations}\label{eq:measurement_disrete}
\begin{align}
	R^y_k &= R(k\dt)\\
	\Omega^y_k &= \Omega(k\dt) + b
\end{align}
\end{subequations}
where $\dt$ is the discretization step size.
The objective is to design a discrete time observer of the continuous time system \pref{eq:sys}  with measurements \pref{eq:measurement_disrete} such that
\begin{align}
	\lim_{k \to \infty} \| \hat{R}_d(k \mid k) - R(k\dt) \| < \epsilon (\dt)
\end{align}
where the order of the error, $\epsilon$ is dependant on $\dt$.
Here, $\hat{R}_d(k \mid k)$ is the discrete time estimate of the state at the $k$th instant based
on information till the $k$th instant of time.
Before proceeding further, we first define the term {\it convergent} in the case of discrete integrators
\begin{definition}\cite{iserles2009first}
	An integrator method is said to be convergent if, for every O.D.E.
	\begin{align*}
		y' = f(t, y), \; t \geq t_0, \; y(t_0) = y_0
	\end{align*}
	with a Lipschitz function $f$ and for every $t^* > 0$ , the following equality
		\begin{align*}
		\lim_{h\to 0+} \max_{n = 0, 1, \ldots, \lfloor t^*/h \rfloor} \| y_{n,h} - y(t_n) \| = 0
	\end{align*}
	where $y_{n,h}$ is the numerical estimate of $y$ after $n$ steps, each with step size $h$, holds.
\end{definition}

We now propose a two-step discrete time observer:
\begin{theorem}
	Consider the rotational kinematics given by \pref{eq:sys} with measurements given by \pref{eq:measurement_disrete}. Assume that $\Omega(t)$ is bounded.
	Let $(\hat{R}_d(k \mid k), \hat{b}_d(k))$ denote the observed state at stage $k$ based on information till stage $k$. Now consider the predictor-corrector system given by

	\begin{subequations}\label{eq:discrete_observer}
	\underline{Predictor step}:  $t \in [(k-1)\dt, k \dt[$
	\begin{align}
		\hat{R}_d(k\mid k-1) = \hat{R}_d(k-1\mid k-1)\exp(\hat{\Omega}_d(k-1)_\times\dt), \quad \hat{R}_d(0 \mid 0) = \hat{R}_{d,0}
	\end{align}

	\underline{Corrector step}: at $t = k\dt$
	\begin{align}
		\omega_k &= vex(\mathbb{P}_a(\hat{R}_d(k\mid k-1)^TR^y_k))\label{eq:eR}\\
		\begin{split}
			\hat{R}_d(k\mid k) &= \hat{R}_d(k\mid k-1) + \hat{R}_d(k\mid k-1)k_p\omega_{k_\times}\dt \\&\quad- k_e\hat{R}_d(k\mid k-1)(\hat{R}_d(k\mid k-1)^T\hat{R}_d(k\mid k-1) - I)\dt \label{eq:Rupd}
		\end{split}\\
		\hat{b}_d(k) &= \hat{b}_d(k-1) + k_b \omega_k\dt\label{eq:bupd}, \quad \quad \quad \quad \quad \quad \quad \quad \quad \quad \hat{b}_d(0) = \hat{b}_{0,d}\\
		\hat{\Omega}_d(k) &= \Omega_k^y - \hat{b}_d(k)\label{eq:omupd}
	\end{align}
	\end{subequations}
	Given a sufficiently small $\dt > 0 ~ \exists ~ M(\dt) \in \mathbb{Z}^+$ and $\epsilon(\dt) > 0$ such that
	\begin{align}
		\| \hat{R}_d(k\mid k) - R(k\dt) \| < \epsilon ~ \forall ~ k > M
	\end{align}
	Moreover, for a sufficiently small $\dt$, the state $(\hat{R}_d(k \mid k), \hat{b}_d(k))$ is locally exponential stable to $(I, 0)$.
\end{theorem}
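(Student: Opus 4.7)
The plan is to view the discrete scheme \pref{eq:discrete_observer} as a convergent splitting integrator for the continuous extended Mahony observer \pref{eq:extended_mahony}, and then transport the asymptotic properties already established in the previous theorem onto the numerical trajectory. Denote by $\Phi_{\dt}$ the composite one-step map defined by one predictor step followed by one corrector step. The predictor advances the rotation along the exact flow of $\dot{R} = R\hat{\Omega}_{d,\times}$ with $\hat{\Omega}_d$ frozen, while the corrector performs a single explicit-Euler-type update that carries the innovation $k_p\omega_\times$, the feedback-integrator term $-k_e\hat{R}(\hat{R}^T\hat{R}-I)$, and the bias update.

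First I would verify consistency. Taylor expanding $\exp(\hat{\Omega}_{d,\times}\dt) = I + \hat{\Omega}_{d,\times}\dt + O(\dt^2)$ and combining with the corrector yields
\begin{align*}
\Phi_{\dt}(\hat{R},\hat{b}) = (\hat{R},\hat{b}) + \dt\, F(\hat{R},\hat{b},t) + O(\dt^2),
\end{align*}
where $F$ is precisely the right-hand side of \pref{eq:extended_mahony}. Since $F$ is polynomial in $\hat{R}$ and linear in $\hat{b}$, it is locally Lipschitz, so $\Phi_{\dt}$ is a first-order convergent one-step integrator in the sense of the definition above. Boundedness of the continuous trajectory, which follows from the Lyapunov functions $V_1$ and $V$ used in the previous theorem, yields on any compact horizon $[0,t^\star]$ the bound $\max_{k \leq \lfloor t^\star/\dt\rfloor} \|\hat{R}_d(k\mid k) - \hat{R}(k\dt)\| \to 0$ as $\dt \to 0$, and similarly for $\hat{b}_d$. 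Combined with the almost-global attractivity $\hat{R}(t) \to R(t)$ from the previous theorem, the triangle inequality delivers, for any target $\epsilon>0$ and any $k \leq \lfloor T(\epsilon)/\dt\rfloor$ with $T(\epsilon)$ large enough, the estimate $\|\hat{R}_d(k\mid k) - R(k\dt)\| < \epsilon$.

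The remaining and main obstacle is extending this from a finite horizon to all $k > M$, because the convergence definition only controls the error over compact intervals. I would handle this by proving the local exponential stability part first and using it as a trapping argument. In the error coordinates $(s,a,y)$ from the continuous proof, with $\tilde{R} = I + s + a_\times$ and $\tilde{b} = -y$, the Jacobian of $\Phi_{\dt}$ at the equilibrium has the form $I + \dt J + O(\dt^2)$, with $J$ block-diagonal whose $s$-block is $-2k_eI$ and whose $(a,y)$-block is the Hurwitz matrix of \pref{eq:ext_mah_err_lin_mah}. Using the quadratic form $W = \tfrac12\xi^TP\xi$ from the continuous proof together with $\tfrac12\|s\|^2$, a direct computation yields
\begin{align*}
W(\Phi_{\dt}\xi) \leq (1 - c\dt + O(\dt^2))\,W(\xi),
\end{align*}
which is a strict discrete Lyapunov function for sufficiently small $\dt$. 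Hence $(I,0)$ is locally exponentially stable for the discrete error dynamics with a basin that is uniform in $\dt$ on a fixed neighborhood. Choosing $T$ so that the continuous trajectory enters this basin, and $\dt$ small enough that the numerical error on $[0,T]$ keeps the iterate inside the basin, the contraction then takes over and delivers a uniform bound $\|\hat{R}_d(k\mid k) - R(k\dt)\| = O(\dt)$ for all $k > M(\dt)$, completing both the error bound and the local exponential stability claims.
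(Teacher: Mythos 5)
Your proposal follows the same core route as the paper: Taylor-expand the exponential in the predictor, absorb the corrector, and recognize the composite map as a first-order consistent (Euler-type) discretization of the modified continuous observer \pref{eq:extended_mahony}; then invoke the classical convergence bound for one-step methods with Lipschitz right-hand side to control $\|\hat{R}_d(k\mid k)-\hat{R}(k\dt)\|$ on compact horizons (the paper carries out the induction $\|e_{k,\dt}\|\le \tfrac{c}{L}\dt[(1+L\dt)^k-1]$ explicitly rather than citing it, but the content is identical). Where you genuinely depart from the paper is in the passage from finite-horizon convergence to the asymptotic claims ``$\forall\,k>M$'' and local exponential stability. The paper disposes of this in one sentence --- since the continuous observer converges exponentially, so does the discrete one for small enough $\dt$ --- which is precisely the step that the finite-horizon estimate does not by itself justify, because the error bound $\tfrac{c}{L}\dt[\exp(t^*L)-1]$ blows up as $t^*\to\infty$. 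Your trapping argument fills this: establish that the linearized one-step error map is $I+\dt J+O(\dt^2)$ with $J$ Hurwitz, obtain the strict discrete Lyapunov decay $W(\Phi_{\dt}\xi)\le(1-c\dt+O(\dt^2))W(\xi)$ from the quadratic form $P$ of the continuous proof, and then choose $T$ and $\dt$ so that the numerical trajectory enters and remains in the contraction basin. This buys a uniform-in-$k$ bound of order $\dt$ and an honest proof of the local exponential stability claim, at the cost of having to check that the basin is uniform in $\dt$ and that the time-varying term $\Omega(t)_\times$ in $J$ does not destroy the uniform positivity of $Q$ (it does not, given the bound $|\Omega_{max}|$ already assumed). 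In short: same discretization-consistency backbone, but your treatment of the infinite-horizon step is a genuine strengthening of the argument the paper only sketches.
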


\begin{proof}%{{{
	For the proposed observer, we have
	\begin{align*}
		\hat{R}_d(k\mid k-1) = \hat{R}_d(k-1\mid k-1)\exp(\hat{\Omega}_d(k-1)_\times\dt)
	\end{align*}

	From here onward, for readability, we adopt the following notation:
	\begin{align*}
		\hat{R}_{a,b} \deff \hat{R}_d(a\mid b)  \quad e^{\hat{\Omega}_{d_{(k-1)}\times}\dt} \deff \exp(\hat{\Omega}_d(k-1)_\times\dt)
	\end{align*}

	From \pref{eq:eR}, and using the above notation, we have
	\begin{align}
	    \omega_{k_\times} &= \frac{e^{-\hat{\Omega}_{d_{(k-1)}\times}\dt}\hat{R}_{k-1,k-1}^TR^y_k - (R^y_k)^T\hat{R}_{k-1,k-1}e^{\hat{\Omega}_{d_{(k-1)}\times}\dt}}{2}\label{eq:proof_discrete_omega}
	\end{align}
	and
	\begin{align*}
		\hat{R}_d(k\mid k-1)^T\hat{R}_d(k\mid k-1) - I &= \left(e^{-\hat{\Omega}_{d_{(k-1)}\times}\dt}\hat{R}_{k-1,k-1}^T\hat{R}_{k-1,k-1}e^{\hat{\Omega}_{d_{(k-1)}\times}\dt} - I\right)\\
		&= e^{-\hat{\Omega}_{d_{(k-1)}\times}\dt}(\hat{R}_{k-1,k-1}^T\hat{R}_{k-1,k-1} - I)e^{\hat{\Omega}_{d_{(k-1)}\times}\dt}\numberthis\label{eq:proof_discrete_RRI}
	\end{align*}

	Substituting \pref{eq:proof_discrete_omega} and \pref{eq:proof_discrete_RRI} in \pref{eq:Rupd}, the state of the proposed observer at $t = k\dt$ given $k$ measurements is given by
	\begin{align*}
		\hat{R}_{k,k} =& ~\hat{R}_{k-1,k-1}e^{\hat{\Omega}_{d_{(k-1)}\times}\dt}\bigg[I - k_e\dt e^{-\hat{\Omega}_{d_{(k-1)}\times}\dt}(\hat{R}_{k-1,k-1}^T\hat{R}_{k-1,k-1} - I)e^{\hat{\Omega}_{d_{(k-1)}\times}\dt}\nonumber
		\\& \quad  + k_p\dt\frac{e^{-\hat{\Omega}_{d_{(k-1)}\times}\dt}\hat{R}_{k-1,k-1}^TR^y_k - (R^y_k)^T\hat{R}_{k-1,k-1}e^{\hat{\Omega}_{d_{(k-1)}\times}\dt}}{2} \bigg]
	\end{align*}

	Now we substitute the definition of the exponential of a matrix,
	\begin{align*}
		\exp(A) = I + A + \frac{A^2}{2!} + \ldots
	\end{align*}

	and rewrite \pref{eq:proof_discrete_omega} as
	\begin{align*}
	    2\omega_{k_\times} &= e^{-\hat{\Omega}_{d_{(k-1)}\times}\dt}\hat{R}_{k-1,k-1}^TR^y_k - (R^y_k)^T\hat{R}_{k-1,k-1}e^{\hat{\Omega}_{d_{(k-1)}\times}\dt}\\
		&= [I + \hat{\Omega}(k-1)_\times\dt + \mathcal{O}(\dt^2)]\hat{R}_{k-1,k-1}^TR^y_k\nonumber \\&- (R^y_k)^T\hat{R}_{k-1,k-1}[I + \hat{\Omega}(k-1)_\times\dt + o(\dt^2)]\\
		&= \hat{R}_{k-1,k-1}^TR^y_k - (R^y_k)^T\hat{R}_{k-1,k-1} + \mathcal{O}(\dt)
	\end{align*}

	The update is given by
	\begin{align*}
		\hat{R}(k\mid k) &= \hat{R}_{k-1,k-1}[I + \hat{\Omega}(k-1)_\times\dt + \mathcal{O}(\dt^2)]\left[I\right. \nonumber
		\\&\; \left.+ k_p\dt\left(\frac{\hat{R}_{k-1,k-1}^TR^y_k - (R^y_k)^T\hat{R}_{k-1,k-1} + \mathcal{O}(\dt)}{2}\right)\right. \nonumber
		\\&\; \left.- k_e\dt[I + \hat{\Omega}(k-1)_\times\dt + \mathcal{O}(\dt^2)](\hat{R}_{k-1,k-1}^T\hat{R}_{k-1,k-1} \right.\nonumber
		\\&\; \left.- I)[I + \hat{\Omega}(k-1)_\times\dt + \mathcal{O}(\dt^2)]\right]\\
		&= \hat{R}_{k-1,k-1}[I + \hat{\Omega}(k-1)_\times\dt + \mathcal{O}(\dt^2)]\bigg[I - k_e\dt[\hat{R}_{k-1,k-1}^T\hat{R}_{k-1,k-1} - I]\nonumber
		\\&\; + k_p\dt\left(\frac{\hat{R}_{k-1,k-1}^TR^y_k - (R^y_k)^T\hat{R}_{k-1,k-1}}{2}\right) + \mathcal{O}(\dt^2)\bigg]\\
		&= \hat{R}_{k-1,k-1}\bigg[I + \hat{\Omega}(k-1)_\times\dt - k_e\dt[\hat{R}_{k-1,k-1}^T\hat{R}_{k-1,k-1} - I]\nonumber
		\\&\; + k_p\dt\left(\frac{\hat{R}_{k-1,k-1}^TR^y_k - (R^y_k)^T\hat{R}_{k-1,k-1}}{2}\right) + \mathcal{O}(\dt^2)\bigg]\\
		&= \hat{R}_{k-1,k-1}\bigg[I + \hat{\Omega}(k-1)_\times\dt - k_e\dt[\hat{R}_{k-1,k-1}^T\hat{R}_{k-1,k-1} - I]\nonumber
		\\&\; + k_p\dt\left(\frac{\hat{R}_{k-1,k-1}^TR^y_k - (R^y_k)^T\hat{R}_{k-1,k-1}}{2}\right)\bigg] + \mathcal{O}(\dt^2)\label{eq:diff_my}\numberthis
	\end{align*}
     The next part of our proof follows on similar lines as the one given in \cite{iserles2009first} demonstrating convergence of the Euler discretization. For notation purposes, we denote
	\begin{align*}
		f(t,\hat{R}) \deff \hat{R}(t)(\Omega^y_k - \hat{b}(t) - k_p\omega(t))_\times - k_e\hat{R}(t)(\hat{R}(t)^T\hat{R}(t) - I)
	\end{align*}

	Note that $f(t, \hat{R})$ is the RHS of the extended continuous time observer, which is continuously differentiable, and hence Lipschitz. Hence, $ \exists ~ L > 0$ such that
	\begin{align}
		\| f(t, x) - f(t, y) \| \leq L \| x - y \|, \quad x, y \in \R^{3\times 3}\label{eq:proof_lipschitz}
	\end{align}

	We now define the error between the extended continuous time observed value and the one from the proposed observer at time $k \dt$ as
	\begin{align*}
		e_{k,\dt} = \hat{R}_d(k \mid k) - \hat{R}(k\dt)
	\end{align*}

	Expanding in a Taylor series, we have
	\begin{align*}
		& \hat{R}(k\dt + \dt) = \hat{R}(k\dt) + \dt \dot{\hat{R}}(k\dt) + \mathcal{O}(\dt^2)\\
		\Rightarrow & \hat{R}((k + 1)\dt) = \hat{R}(k\dt) + \dt f(k\dt, \hat{R}(k\dt)) + \mathcal{O}(\dt^2) \label{eq_proof_discrete_taylor}\numberthis
	\end{align*}

	Since $\hat{R}$ is continuously differentiable and $\Omega$ bounded, $\mathcal{O}(\dt^2)$ is bounded by a term $c\dt^2$ where $c > 0$ is a constant.
	Subtracting \pref{eq_proof_discrete_taylor} from \pref{eq:diff_my} we have
	\begin{align*}
		e_{k+1, \dt} = e_{k, \dt} + \dt \left[f(k\dt, \hat{R}(k\dt) + e_{k, \dt}) - f(k\dt, \hat{R}(k\dt))\right] + \mathcal{O}(\dt^2)
	\end{align*}

	It follows from the triangle inequality that
	\begin{align*}
		\| e_{k+1, \dt} \| & \leq \| e_{k, \dt} \| + \dt \| f(k\dt, \hat{R}(k\dt) + e_{k, \dt}) - f(k\dt, \hat{R}(k\dt)) \| + c\dt^2\\
		\Rightarrow\| e_{k+1, \dt} \| & \leq (1 + L\dt)\| e_{k, \dt} \| + c\dt^2\numberthis\label{eq_proof_discrete_ek1_ek}
	\end{align*}

	We now claim
	\begin{align}
		\| e_{k, \dt} \| \leq \frac{c}{L}\dt[(1 + L\dt)^k - 1], \quad k = 0, 1, 2, \ldots\label{eq_proof_discrete_ek_explicit}
	\end{align}

	{\it Proof by induction}: \\
	Clearly, since the value at the initial time epoch is assumed to be same as that of the continuous time, $e_{0,k} = 0$.
	For general $k > 0$, we assume that \pref{eq_proof_discrete_ek_explicit} is true up to $k$ and use \pref{eq_proof_discrete_ek1_ek}
	\begin{align*}
		\| e_{k+1, \dt} \| \leq (1 + L\dt)\frac{c}{L}\dt((1 + L\dt)^k - 1) + c\dt^2 = \frac{c}{L}\dt[(1 + L\dt)^{k+1} - 1]
	\end{align*}

	Hence, \pref{eq_proof_discrete_ek_explicit} holds true.

	The constant $L\dt$ is positive, hence $(1 + L\dt) < \exp(L\dt) \Rightarrow (1+L\dt)^n < \exp(nL\dt)$.
	The index $k$ is allowed in the range $\{0, 1, \ldots, \lfloor t^*/\dt \rfloor\}$, hence $(1 + L\dt)^k < \exp(\lfloor t^*/\dt \rfloor L\dt) \newline \leq \exp(t^*L)$.

	Substituting in \pref{eq_proof_discrete_ek_explicit}, we obtain that
	\begin{align*}
		\| e_{k, \dt} \| \leq \frac{c}{L}\dt[\exp(t^*L) - 1], \quad k = 0, 1, 2, \ldots, \lfloor t^*/h \rfloor
	\end{align*}

	Since $c(\exp(t^*L) - 1)/L$ is independent of $\dt$, it follows that
	\begin{align*}
		{\lim_{\dt \to 0}} \max_{0 \leq k\dt \leq t^*} \| e_{k,\dt} \| = 0
	\end{align*}
	Since this holds for every $t^* > 0$, we have that the discrete time observer is convergent to the modified continuous time Mahony observer.
	Since the continuous time observer has an exponential convergence rate, we can conclude that the proposed predictor-corrector observer also has an exponential convergence rate given a choice of a small enough $\dt$.
\end{proof}%}}}

\begin{comment1}
The reason for such a discretization is as follows.
The predictor part of the algorithm provides an estimate based on just the rotational kinematics, assuming a fixed value of $\Omega$.
The corrector then brings in a correction term based on the measurement and on similar lines as the modified Mahony estimator.

\end{comment1}

	\section{Simulations}
		\label{sec:simulations}
		% SIMULATIONS
A few numerical experiments are performed for checking the efficacy of the observer.
We consider the response of the system with $\Omega = 1$ and the initial estimate different from the actual value of the system.
For all simulations performed, $\dt$ is taken to be \SI{0.5}{\sec} unless specified.

\subsection{Response of the system for constant $\Omega$}
% CONSTANT INPUT OMEGA RESPONSE{{{
The input angular velocity of the system is kept constant at $[1, 1, 1]$ rad/s $\forall t \geq 0$.
This simplistic value was taken to prevent issues such as aliasing and sampling time.

\iftoggle{arxiv}{
Figures (\ref{fig:err_ke}, \ref{fig:err_kp}, \ref{fig:err_kI}) show variation of the system's
performance for various values of $k_p$, $k_I$ and $k_e$.
A few conclusions can be drawn from this:
\begin{itemize}
    \item Changing only the value of $k_e$ may not change the rate of convergence of the error.
    This can be seen in Figure (\ref{fig:err_ke}).
    \item Increasing value of $k_p$ reduces the time taken by the estimate to reach the desired value and results in oscillations.
    However, increasing $k_p$ beyond a certain value also leads to the observer never reaching the desired state value and oscillating about a point with a finite error.
    Refer Figure (\ref{fig:err_kp}).
    \item Increasing value of $k_I$ increases the frequency of oscillations, but does not affect the time required to settle to the required value.
Refer Figure (\ref{fig:err_kI}).
\end{itemize}

\begin{figure}[h]
    \centering
    \includegraphics[width=0.8\linewidth]{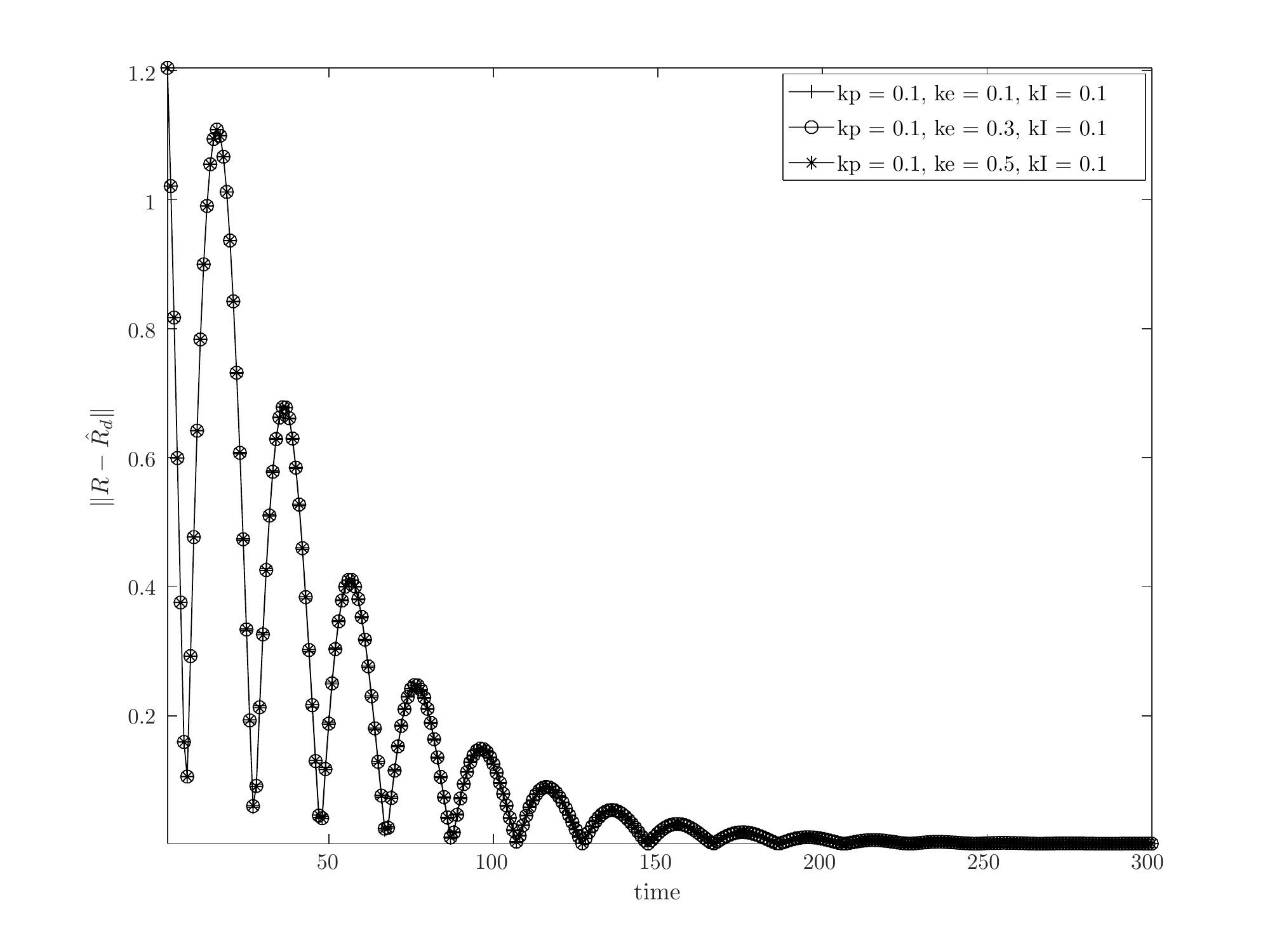}
	\caption{\label{fig:err_ke}Variation of rate of convergence with $k_e$ ($k_p$ and $k_I$ constant) ($\dt = 0.5s$)}
\end{figure}
\begin{figure}[h]
    \centering
    \includegraphics[width=0.8\linewidth]{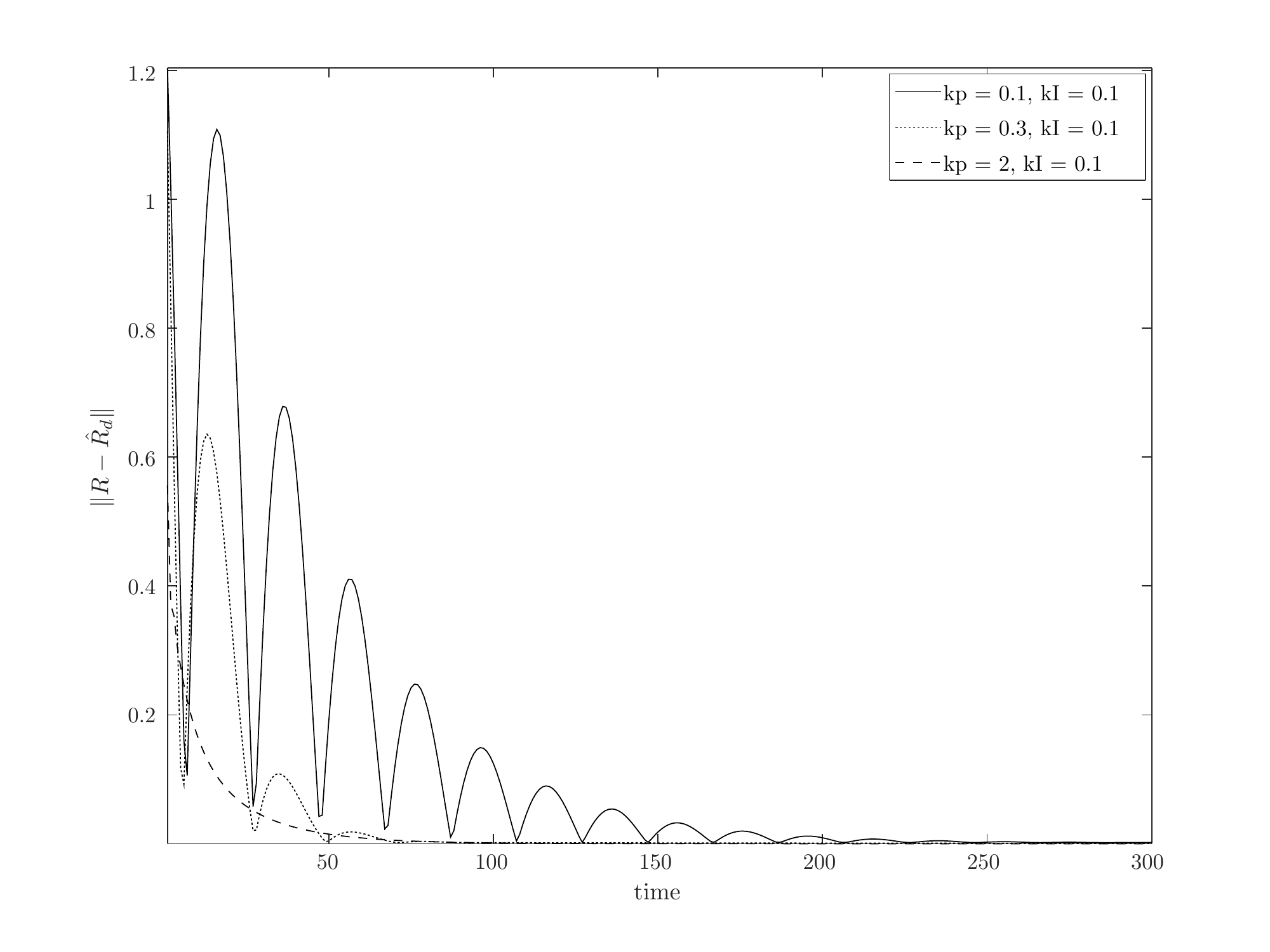}
	\caption{\label{fig:err_kp}Variation of rate of convergence with $k_p$ ($k_e$ and $k_I$ constant) ($\dt = 0.5s$)}
\end{figure}
\begin{figure}[h]
    \centering
    \includegraphics[width=0.8\linewidth]{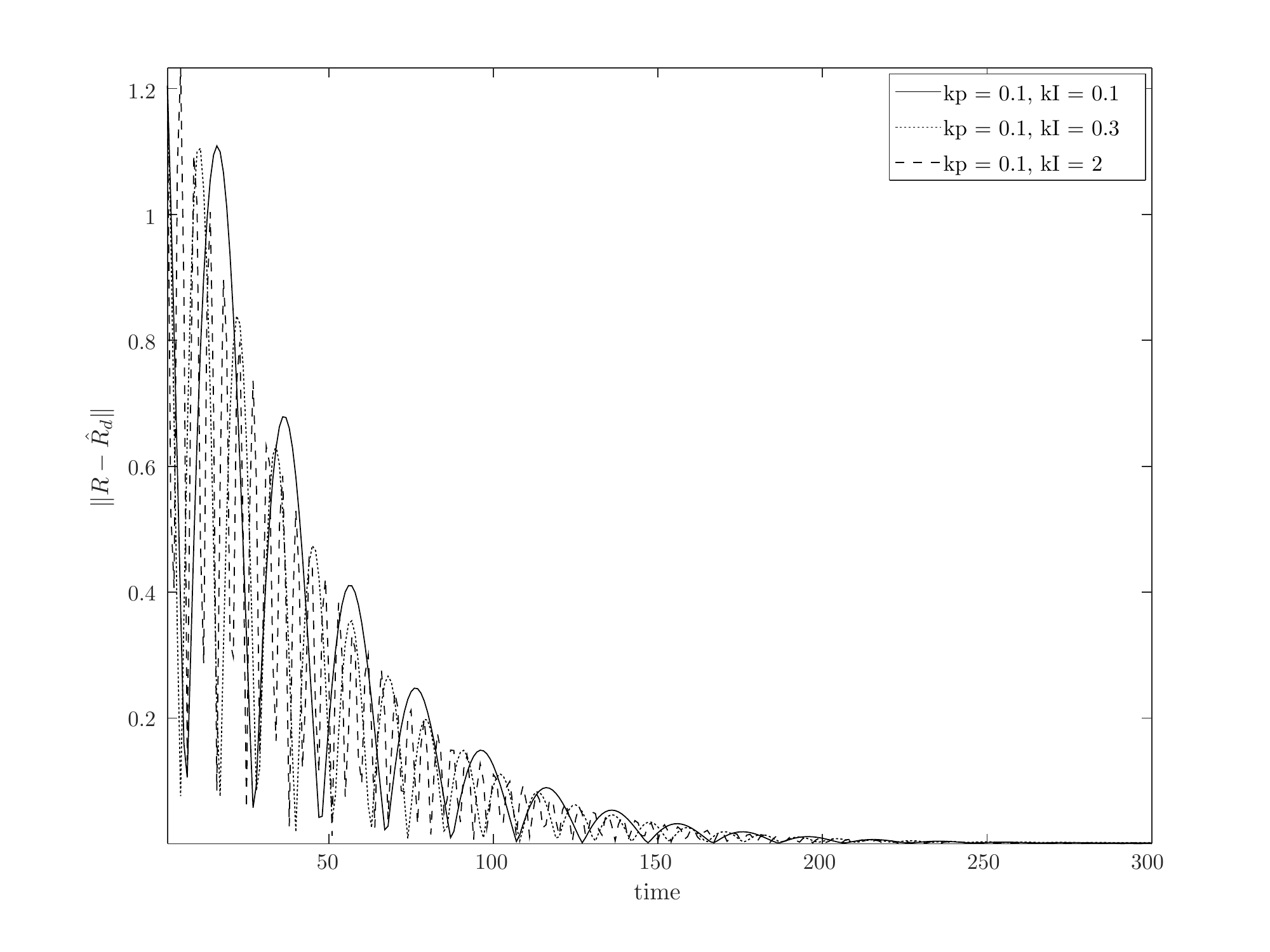}
	\caption{\label{fig:err_kI}Variation of rate of convergence with $k_I$ ($k_p$ and $k_e$ constant) ($\dt = 0.5s$)}
\end{figure}

}{
Figure \pref{fig:err} shows the variation of the error in the observed estimate with time.
An important feature to note is that the observer converges to the actual state eventually and the error in the observed state keeps reducing.
A few conclusions can be drawn from this:
\begin{itemize}
    \item Changing only the value of $k_e$ may not change the rate of convergence of the system.
    \item Increasing value of $k_p$ reduces the time taken by the system to reach the desired value and results in oscillations.
    However, increasing $k_p$ beyond a certain value also leads to the system never reaching the desired value and oscillating about a point with a finite error.
    \item Increasing value of $k_I$ increases the frequency of oscillations, but does not affect the time required to settle to the required value.
\end{itemize}
\begin{figure}[h]
    \centering
    \includegraphics[width=0.8\linewidth]{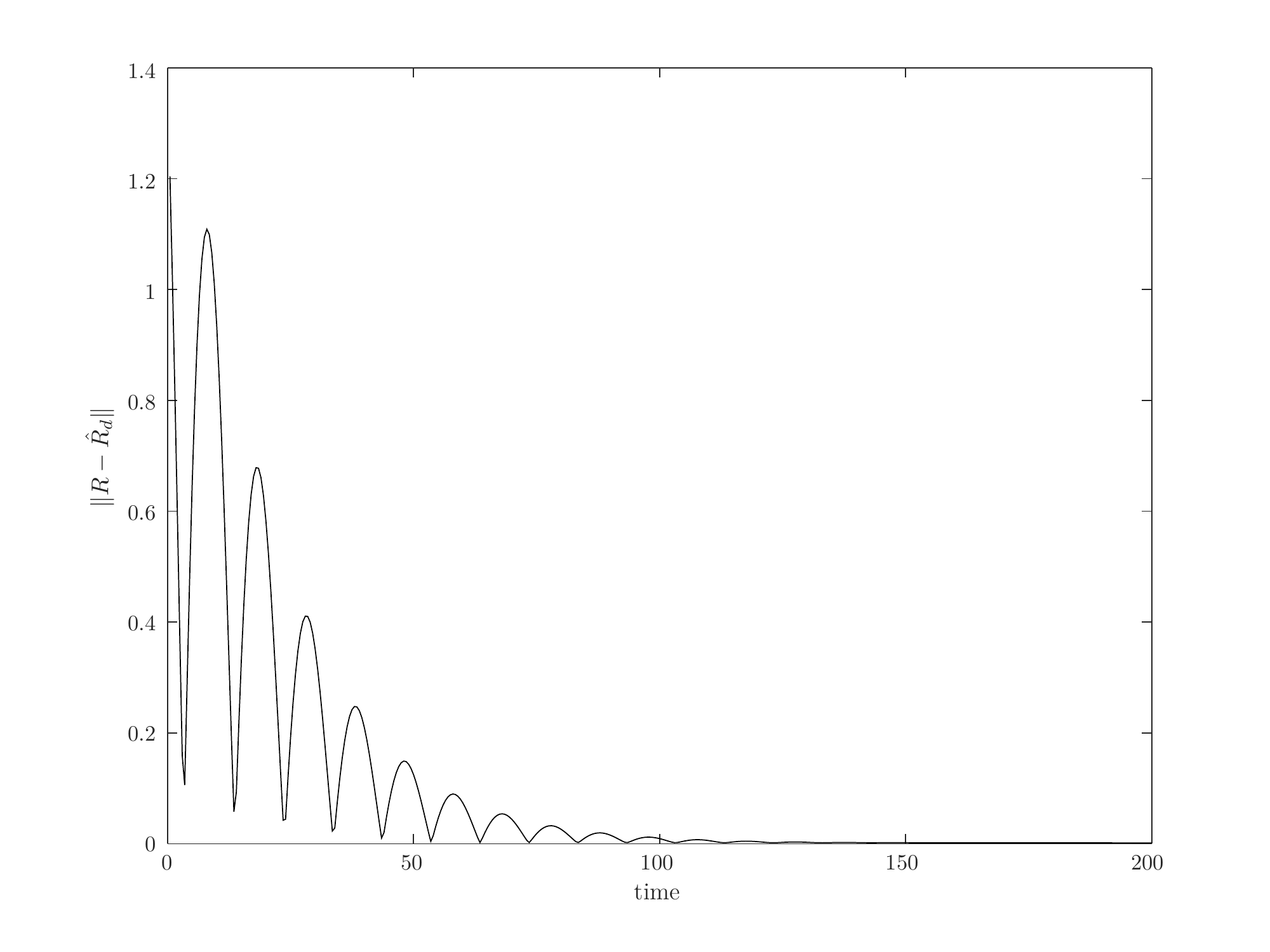}
    \caption{\label{fig:err}Evolution of frobenius norm of error between estimate and true value with time ($\dt = 0.5s$)}
\end{figure}
}

We change values of $k_p$ and $k_I$ to arrive at a faster converging observer for the system.
Refer to Figure \pref{fig:improved_con}.

\begin{figure}[h]
    \centering
    \includegraphics[width=0.8\linewidth]{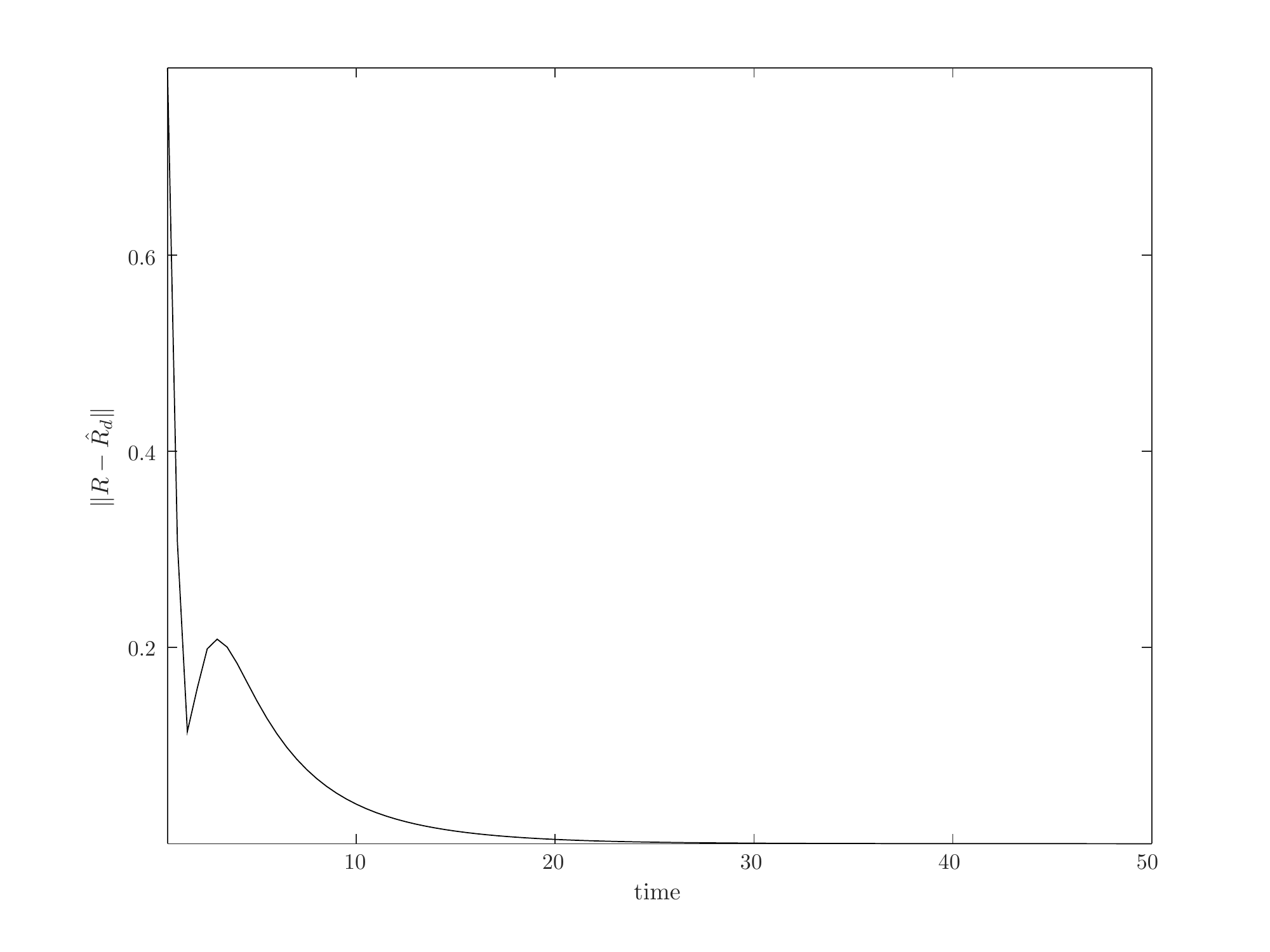}
    \caption{\label{fig:improved_con}Faster convergence with $k_p = 1$ and $k_I = 0.3$ ($\dt = 0.5s$)}
\end{figure}%}}}

\subsection{Convergence of system to the manifold}
% CONVERGENCE OF SYSTEM TO MANIFOLD{{{
We consider the effect of the term $k_e\hat{R}_d(k\mid k-1)(\hat{R}_d(k\mid k-1)^T\hat{R}_d(k\mid k-1) - I)\dt$ on keeping the system trajectory on the manifold by performing simulations ignoring the term and including the term.
We know that an element $A \in \SO3$ satisfies $\tr(A^TA - I) = 0 \Rightarrow \| A \|^2 - 3 = 0$, the norm used being the Frobenius Norm.
We use this property to check if the system converges to the estimate.
The final value attained by the system with feedback integrator term is 1.7321 as compared to the value 1.9218 attained by the system without feedback integrator term in 100s.
\begin{figure}[h]
	\centering
	\includegraphics[width=0.8\linewidth]{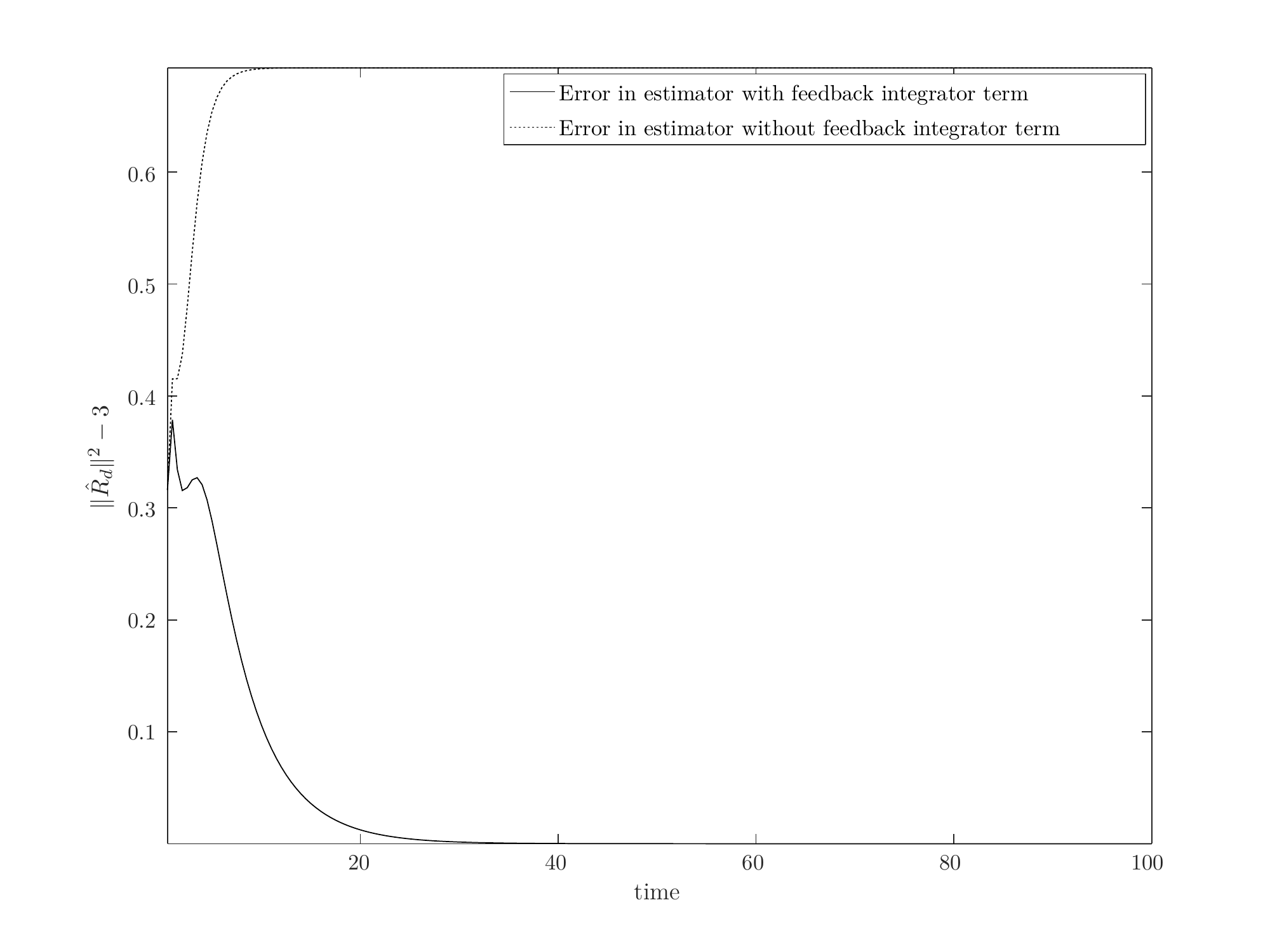}
	\caption{Convergence of the system to manifold\label{fig:conv_manifold} ($\dt = 0.5s$)}
\end{figure}

Moreover, without the $k_e$ term, the system does not converge to the manifold.
This can be seen from figure \pref{fig:conv_manifold_err}.
\begin{figure}[h]
	\centering
	\includegraphics[width=0.8\linewidth]{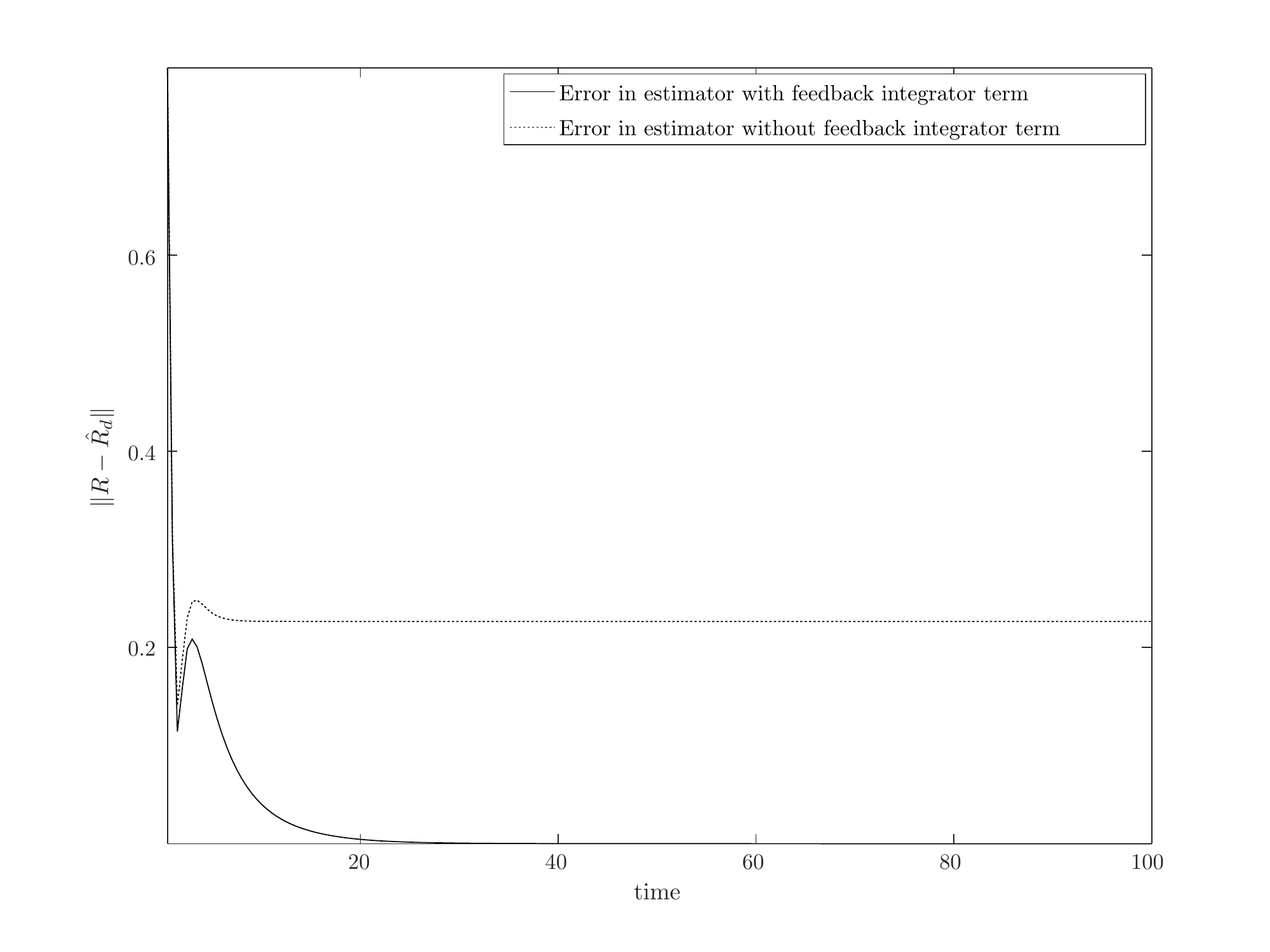}
	\caption{Effect of feedback integrator term on error between estimate and real value of system\label{fig:conv_manifold_err}}
\end{figure}

%}}}

\subsection{Response of system in presence of noise}
% NOISE CHARACTERISTICS{{{
We introduced noise of high frequency (as compared to the discretization time step) as a sinusoid into the measurements to study the effect on the performance of the observer.
The sinusoid has a frequency of 159 Hz and amplitude of 0.1. The measurements are now of the form
\begin{align*}
	R_k^y & = R(k\dt)\exp\left(0.1\sin(\omega k\dt)\begin{pmatrix}1\\ 1\\ 1\end{pmatrix}\right)\\
		\Omega_k^y & = \Omega(k\dt) + 0.1\sin(\omega k\dt)\begin{pmatrix}1\\ 1\\ 1\end{pmatrix}
\end{align*}

For the system performance in presence of noise, refer to Figure \pref{fig:noise_omega} and Figure \pref{fig:noise_R}. It is seen that the system does converge to a ball around the true value indicating that the system is robust to noise in measurements.
\begin{figure}[h]
    \centering
    \subfloat[Noise in $\Omega$ of magnitude 0.1\label{fig:noise_omega}]{\includegraphics[width=0.8\linewidth]{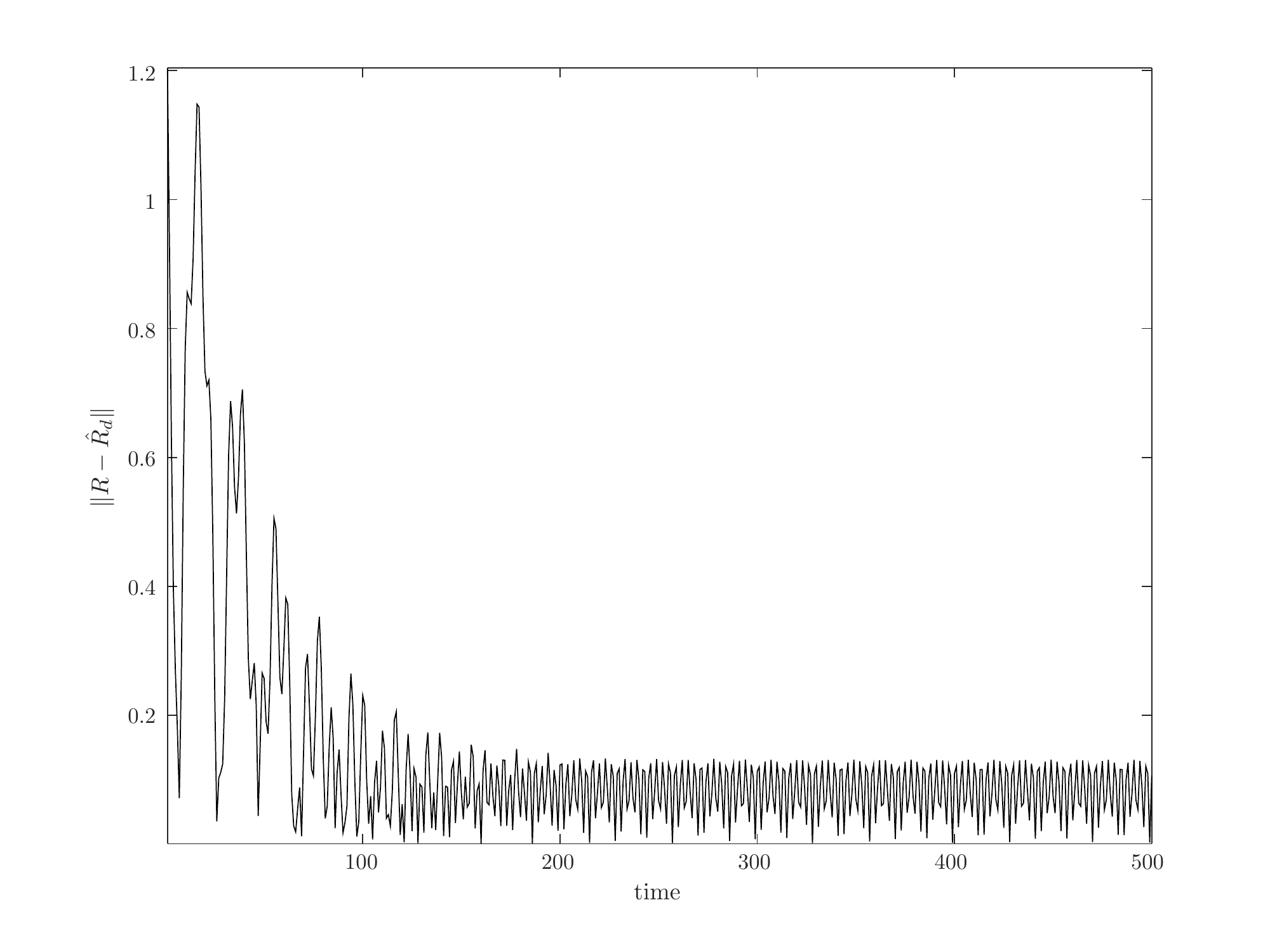}}\\
    \subfloat[Noise in R of magnitude 0.1\label{fig:noise_R}]{\includegraphics[width=0.8\linewidth]{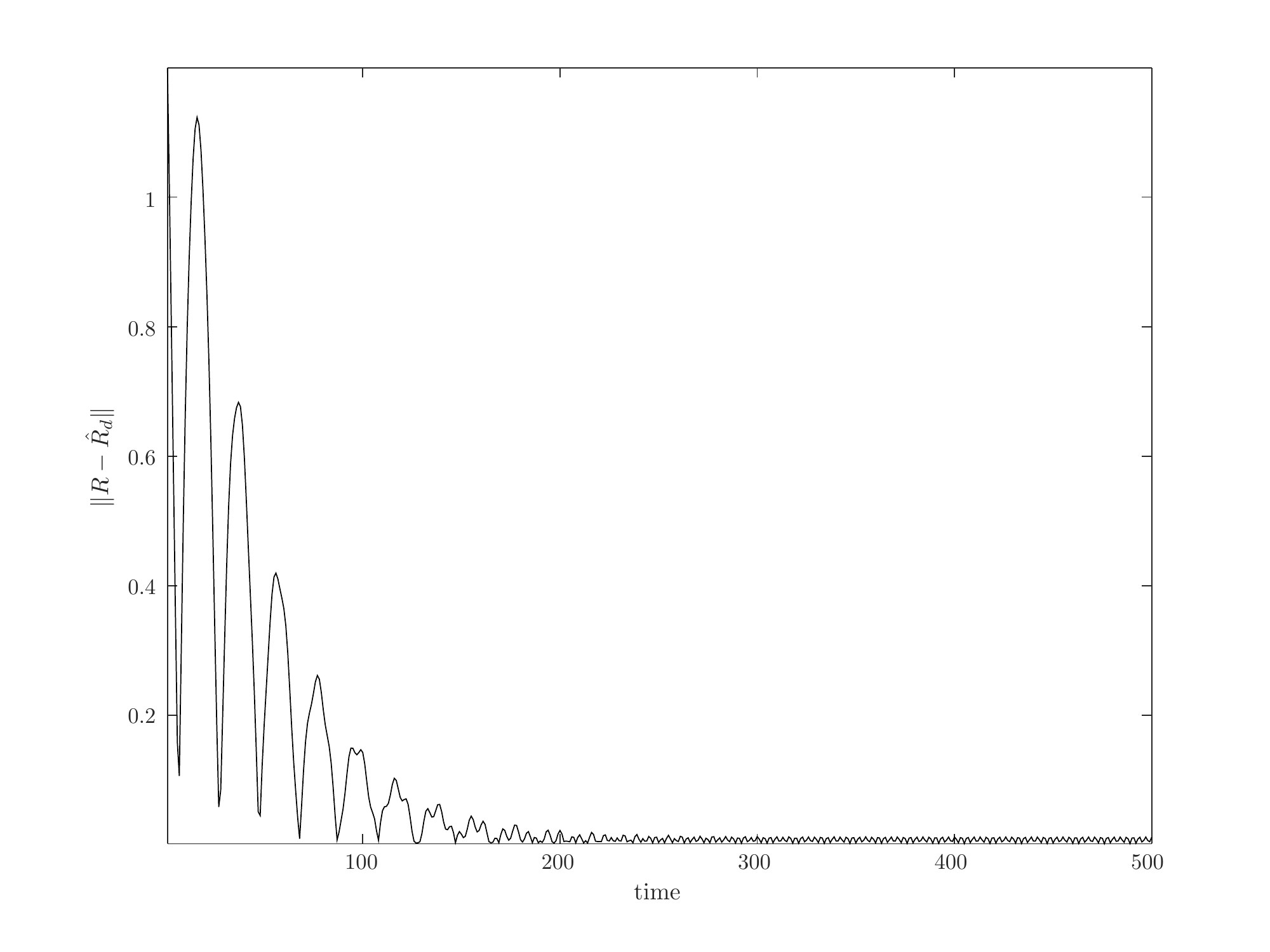}}
    \caption{Rate of convergence in presence of noise ($\dt = 0.5s$)}
\end{figure}%}}}

\iftoggle{arxiv}{
\subsection{Variation of estimate error with discretization interval $\dt$}
% remove and keep in arxiv version.
% ERROR CHARACTERISTICS{{{
We see that the amplitude of the error plots reduces with time and they have an oscillatory nature.
Hence, we would like to know the nature of reduction of this amplitude of oscillations.
For this purpose, we take the local maxima (or the crest of these oscillations) and consider a curve passing through these points.
The following plot shows the log of the local maximas of error with time.
The simulations are run for 1000s.
\begin{figure}[h]
    \centering
    \includegraphics[width=0.8\linewidth]{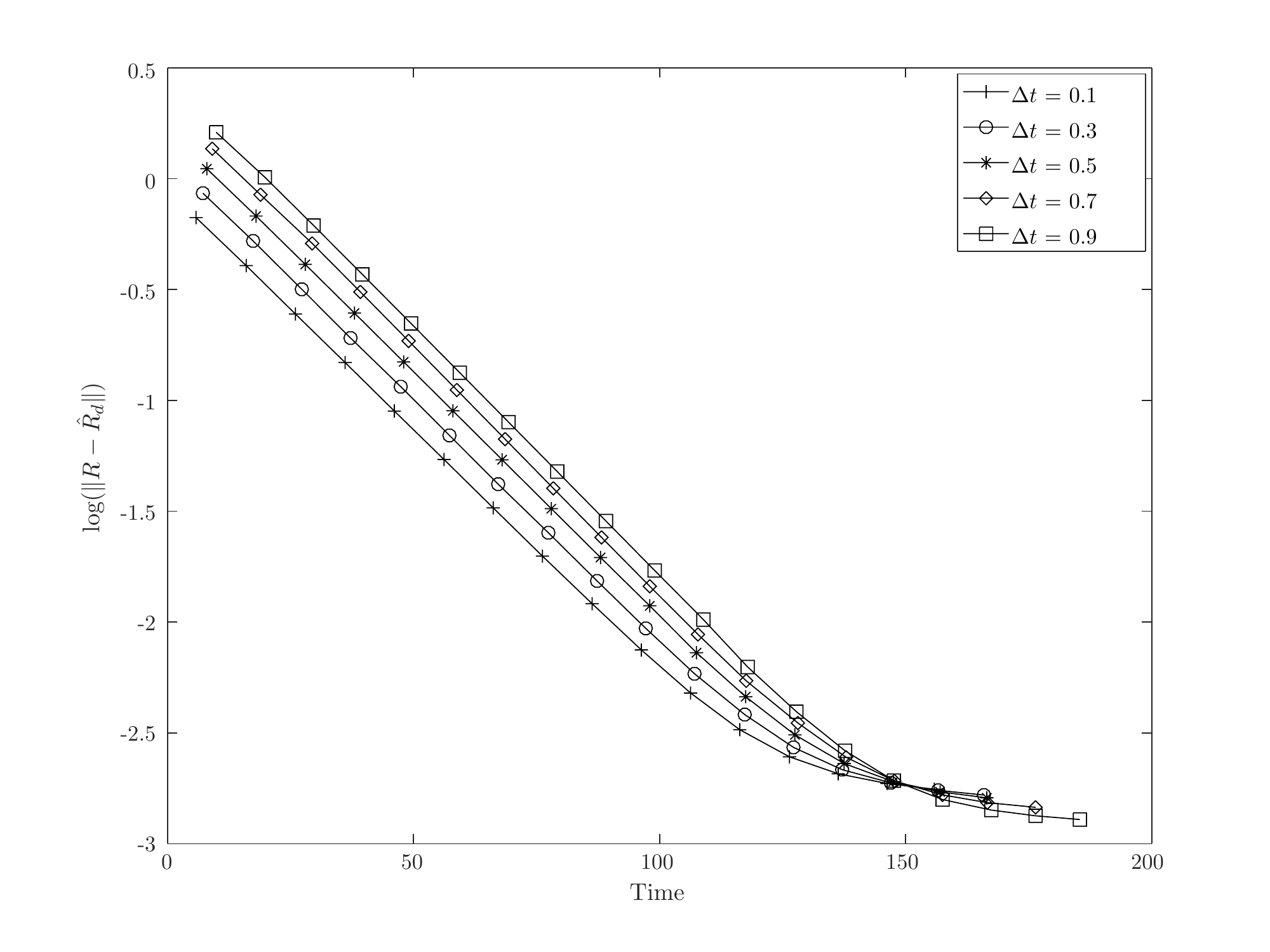}
    \caption{Variation of local maxima of error with time}
\end{figure}
The value of the error keeps on decreasing and is quite small around 200s in each of the cases, and the plots do not show any peaks.
However, due to the discrete nature of the system, the estimated value never exactly matches the actual value.%}}}

\subsection{Comparison between the proposed observer and Euler discretization}
% remove and keep in arxiv version.
% COMPARISON TO EULER{{{
Consider the Euler discretization of the Mahony observer \pref{eq:mahony_passive_filter} given by
\begin{subequations}
\begin{align}
	\omega_{eu, k} &= vex(\mathbb{P}_a(\hat{R}_{eu,k}^TR^y_k))\\
	\hat{R}_{eu,k+1} &= \hat{R}_{eu,k}\left(I + (\Omega^y_k - \hat{b}_{eu,k} + k_p\omega_{eu,k})_{\times}\dt\right)\\
	\hat{b}_{eu,k+1} &= \hat{b}_{eu,k} - k_I\omega_{eu,k}\dt
\end{align}
\end{subequations}

We compare the estimate arrived through this with the observer proposed in this article.
The comparative plot for the same is shown in figure \pref{fig_comp_eu_pro}.
\begin{figure}[h]
    \centering
    \includegraphics[width=0.8\linewidth]{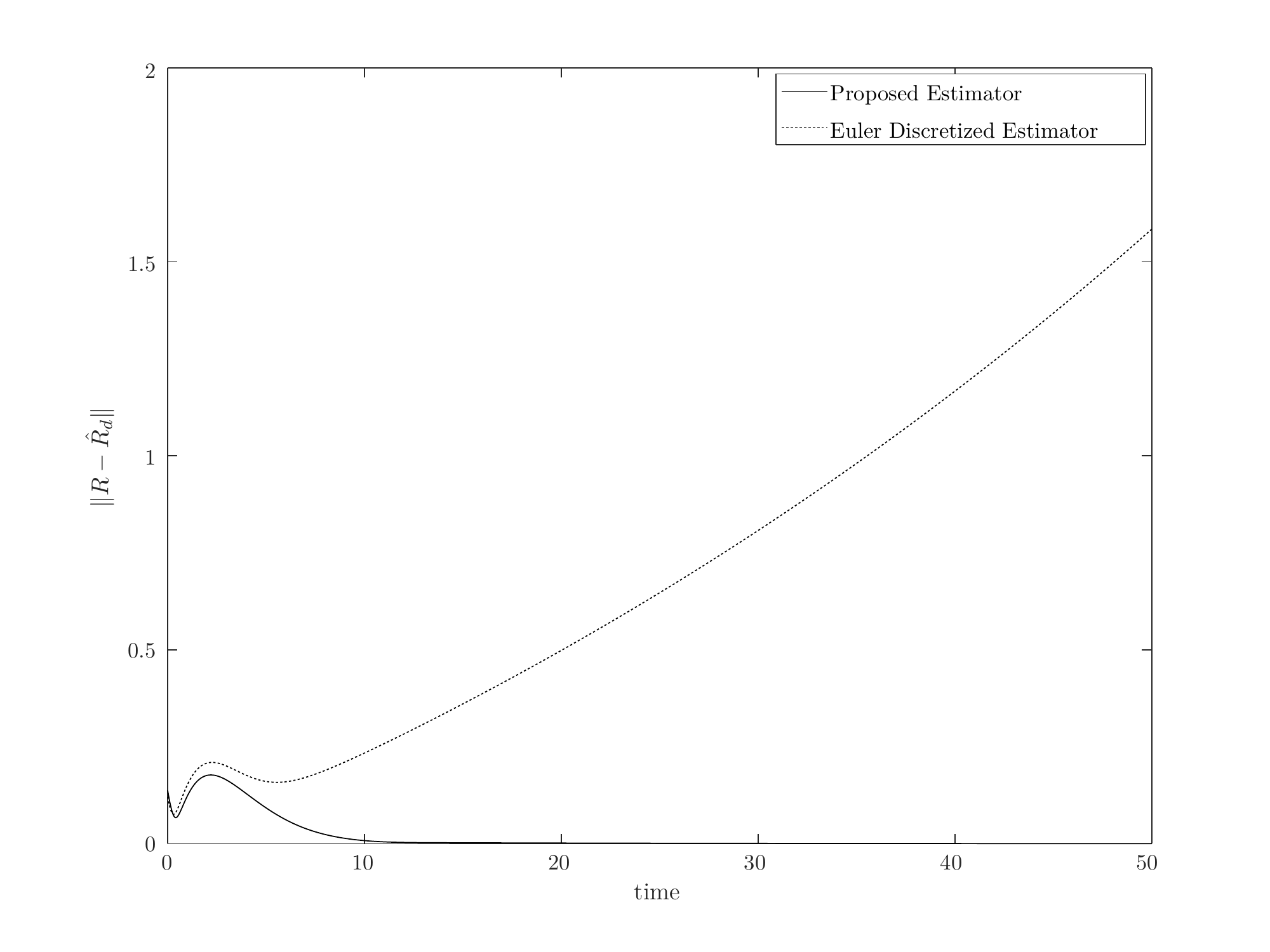}\\
    \caption{Comparison of convergence between Euler discretization and Proposed observer at $\dt = 0.01$s\label{fig_comp_eu_pro}}
\end{figure}

The system does not converge for $\dt = 0.5$s, which is the discretization time taken for all the other simulations presented in this report. Hence, the proposed observer is a huge improvement over standard Euler discretization.
}{}

	\section{Experimental Results}
		\label{sec:experiment}
		% SIMULATIONS ON ACTUAL DATA
We consider offline simulations performed on data acquired through experiment.
The experiment uses an ARdrone which is flown for approximately 100s.
It is used to capture data using an onboard magnetometer, a gyroscopic sensor and an accelerometer.
This data is sent to the computer.
Simultaneously, data using Vicon measurement systems is collected, which has a much higher accuracy and is considered the ``true'' value of the state.
However, due to human errors, the inertial frame of reference in which the Vicon measurements are collected and one in which the onboard sensors collect data are different.
To correct for this error, the system is kept at rest for some amount of time initially and measurement from both the systems are collected.
These measurements are used to calculate the rotation error in the stationary frames of references and correct the true measurements for comparison during simulations.
The time difference between successive data points collected using onboard sensor is approximately 0.02s.

This data set was first used with the Euler-discretization based observer
\begin{subequations}
\begin{align}
	\omega_{eu, k} &= vex(\mathbb{P}_a(\hat{R}_{eu,k}^TR^y_k))\\
	\hat{R}_{eu,k+1} &= \hat{R}_{eu,k}\left(I + (\Omega^y_k - \hat{b}_{eu,k} + k_p\omega_{eu,k})_{\times}\dt - k_e(\hat{R}_{eu,k}^T\hat{R}_{eu,k} - I)\dt\right)\\
	\hat{b}_{eu,k+1} &= \hat{b}_{eu,k} - k_I\omega_{eu,k}\dt
\end{align}
\end{subequations}

For the Euler-discretization based Mahony observer, we choose data points at time steps of 0.04s.
The resulting error plot between the true value of the rotation matrix and the estimated value using the observer is shown in Figure \pref{fig:real_eu}.
\begin{figure}[h]
	\centering
	\includegraphics[width=0.8\linewidth]{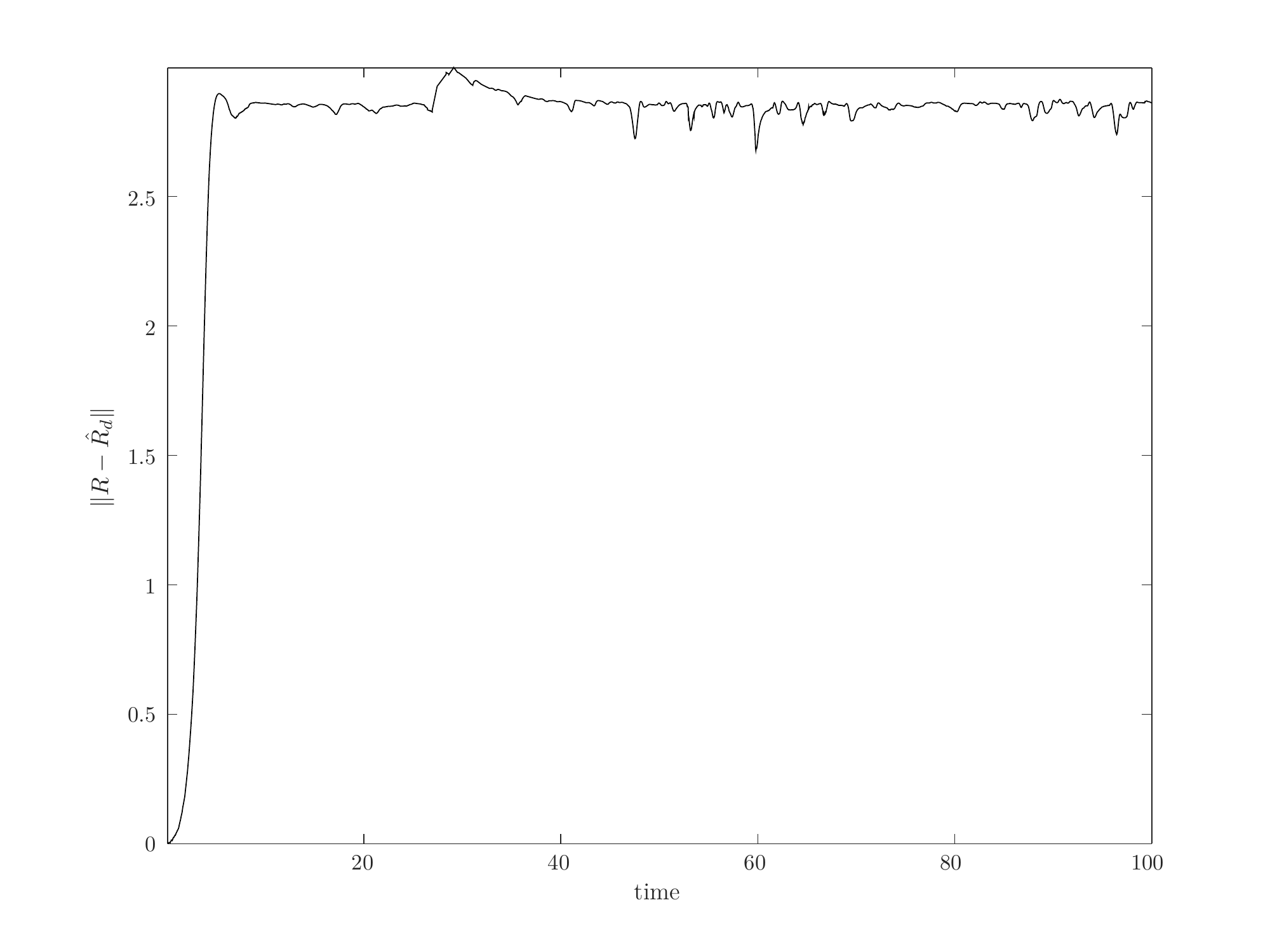}
	\caption{Estimate error with time using Euler discretization \label{fig:real_eu}}
\end{figure}
It can be seen from the plot that the Frobenius norm of the error ends up constant around a value of 3.
In an ideal scenario, we would like the estimate to mimic the true value, hence we would like the norm of the error to be 0.
Hence, this estimator is not a ``good'' estimator of the system.

Then the data set was used with the proposed observer \pref{eq:discrete_observer} with $\dt = 0.2s$.
The data for this is acquired from the collected data by taking the time instants which are closest to multiples of 0.2s, i.e. $0.2s, 0.4s, \ldots$.
The error between the estimate and the true value is shown in Figure \pref{fig:real_proposed}.
\begin{figure}[h]
	\centering
	\includegraphics[width=0.8\linewidth]{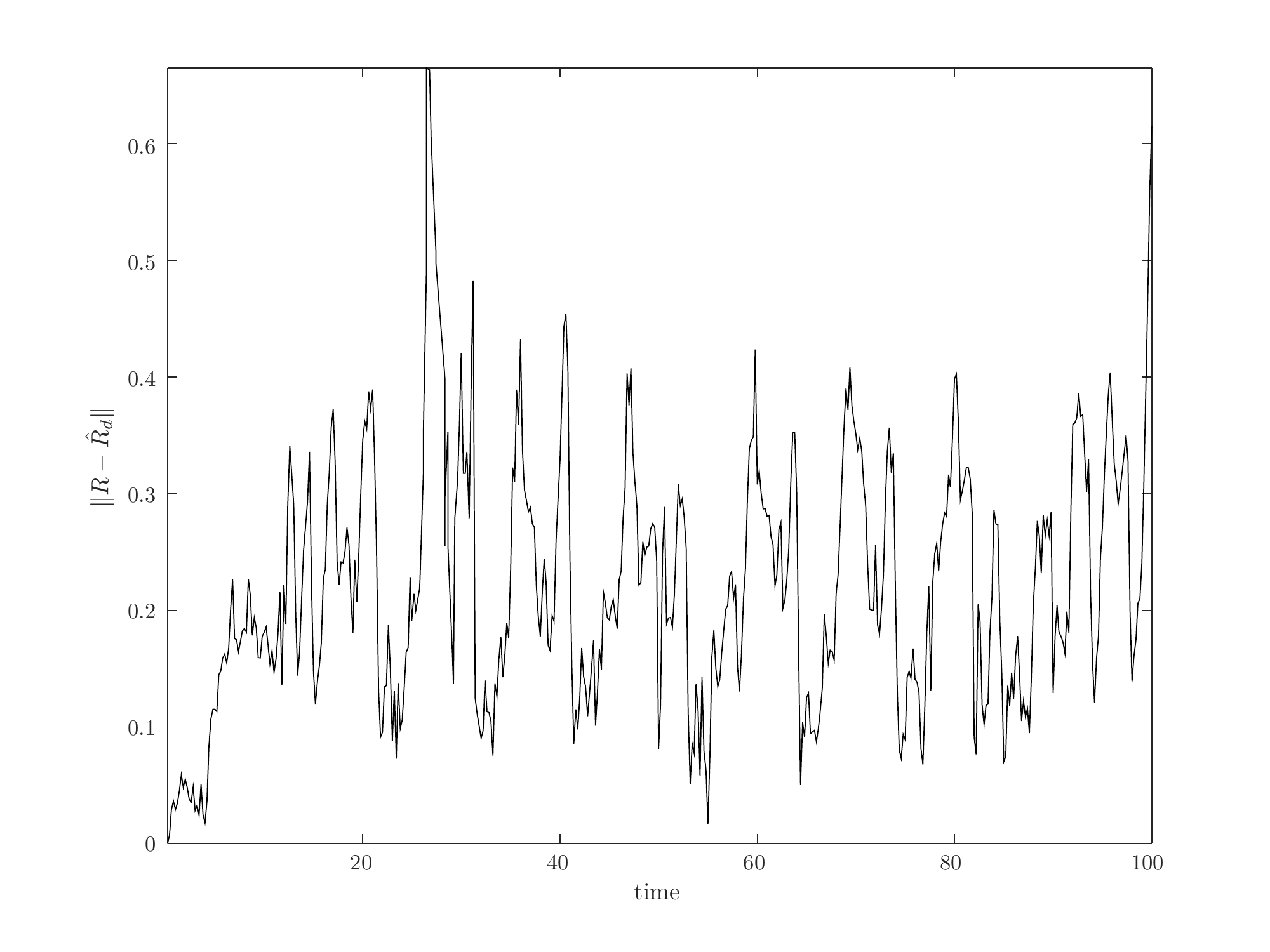}
	\caption{Estimate error with time using the proposed observer \label{fig:real_proposed}}
\end{figure}
We can see that the maximum error of the mean of the data here is of magnitude 0.6 and it reduces after reaching this value.
Moreover, the mean of this data would be around 0.3.
Since the error in the observed values is less compared to that arrived at by Euler-discretisation based observer, and with much sparser measurements, we conclude that this observer gives a better estimate of the system than Euler discretisation.

	\section{Convergence vs Complexity}
		\label{sec:complexity}
		% COMPLEXITY VS CONVERGENCE
Given that our proposed observer uses an exponential prediction term, one would believe that it would require significantly higher level of computing time as compared to a simple Euler observer.
However, the benefit of our observer is pronounced when the system in question has sparse measurements (larger $\Delta T$).
We now show two simulations comparing the Euler discretized observer with feedback integrator and the proposed observer on the time required by the observers to converge to the real value as measurements get more infrequent.
\begin{figure}[h]
	\centering
	\subfloat[Euler discretization, $\dt = 0.001s$]{\includegraphics[width=0.45\linewidth]{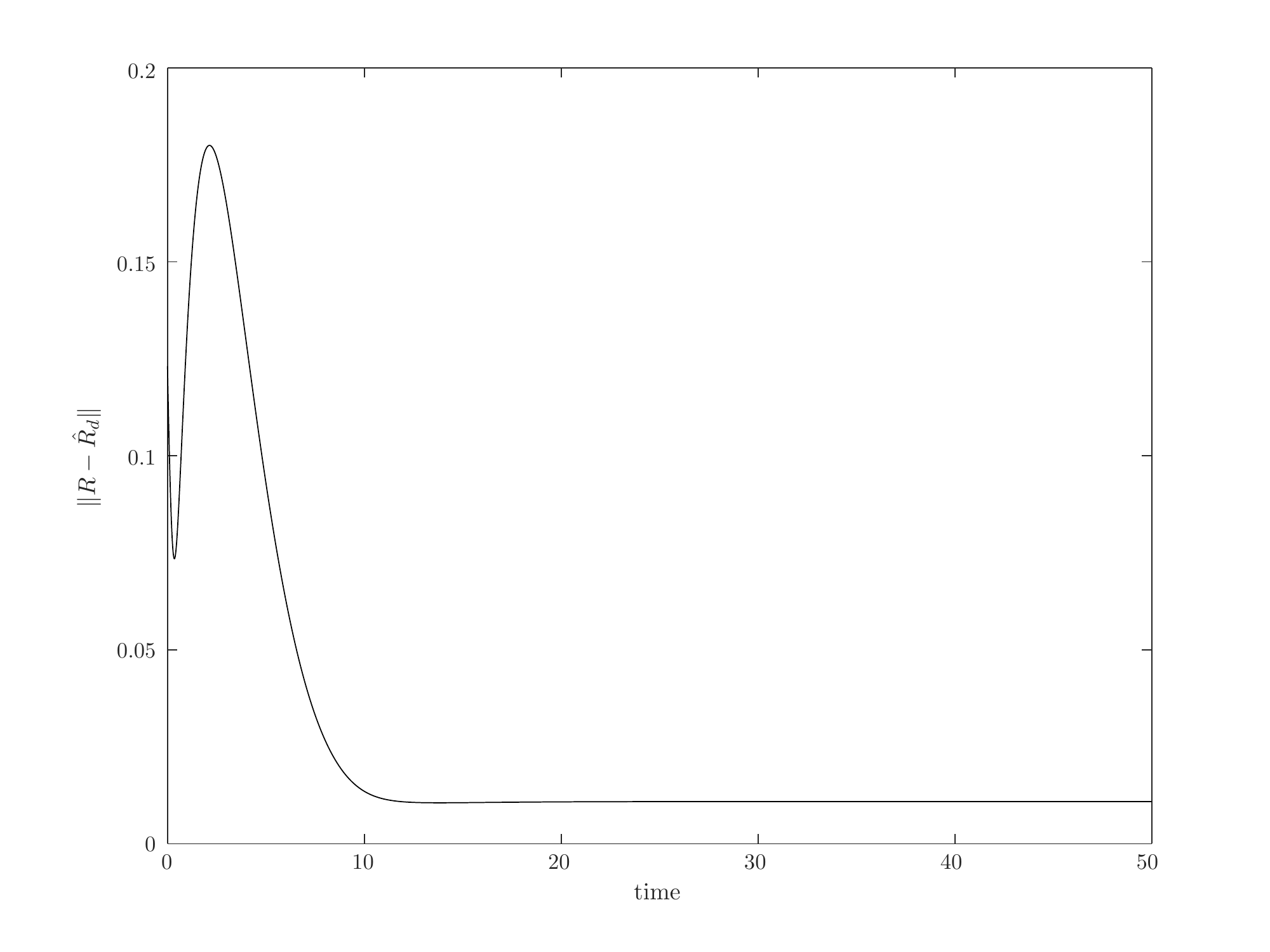}}\quad
	\subfloat[Proposed observer, $\dt = 0.5s$]{\includegraphics[width=0.45\linewidth]{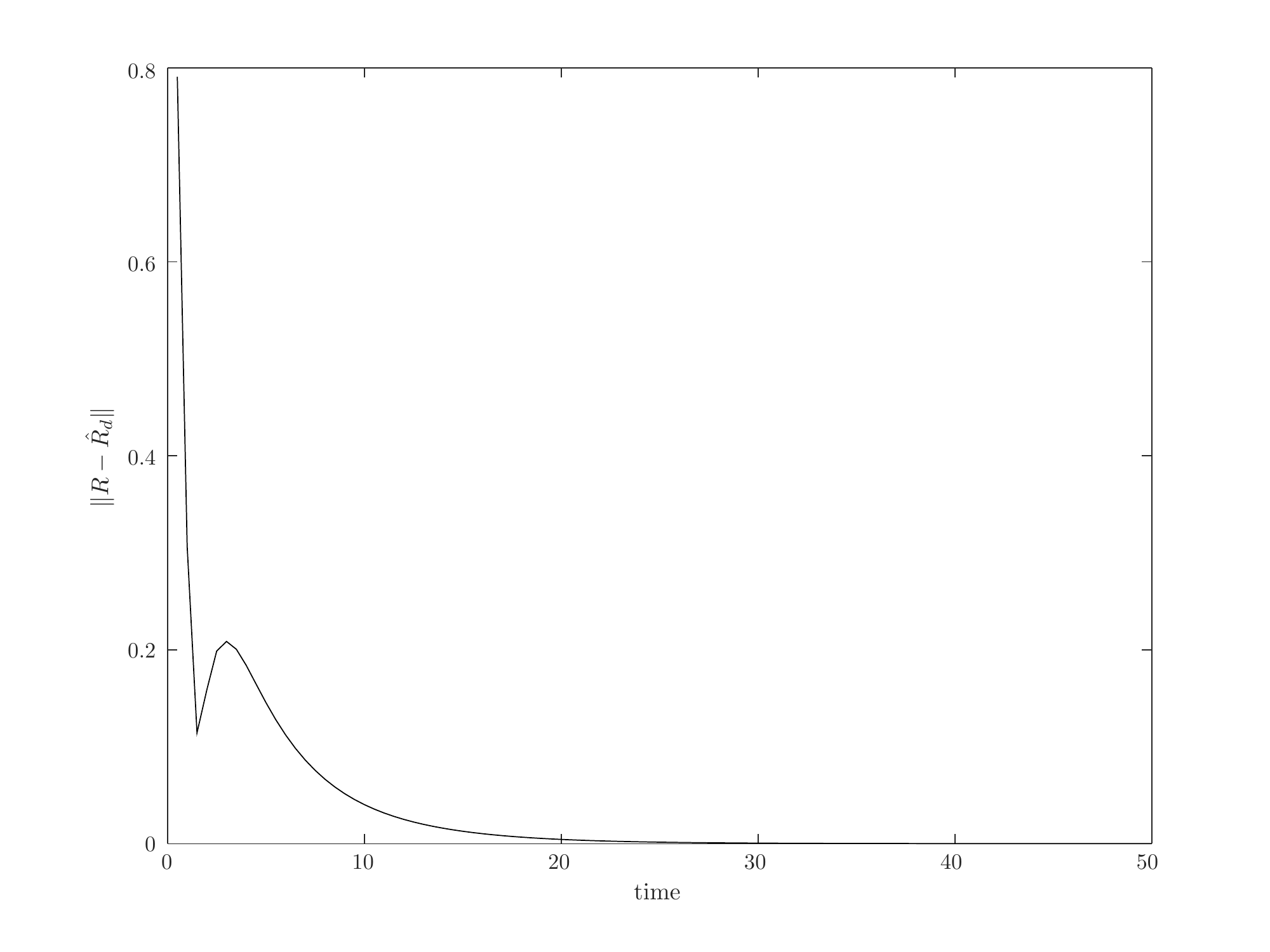}}
	\caption{Frequency of data availability vs convergence comparison between Euler discretization and proposed observer\label{fig:conv_vs_compl}}
\end{figure}

As can be seen in Figure \pref{fig:conv_vs_compl}, even with a comparatively higher time step, the proposed observer performs significantly better than Euler discretization.
Moreover, the Euler discretization takes over 30-40s in simulations, while the proposed observer takes about 0.3s.
Hence, we can definitely conclude that the proposed observer is far superior as compared to Euler discretization, and whenever the measurements are sparse, the proposed observer is a much better choice as compared to Euler discretization even including feedback integrator.

As a concluding remark, the implementor is free to discretize the exponential using Taylor series expansion if he wishes to reduce the complexity of the observer, while not compromising on convergence results.
As an example of this, a simulation with Taylor expansion till the second term is shown in figure \pref{fig:taylor}. The time taken for this is 0.8s.
\begin{figure}[h]
	\centering
	\includegraphics[width=0.8\linewidth]{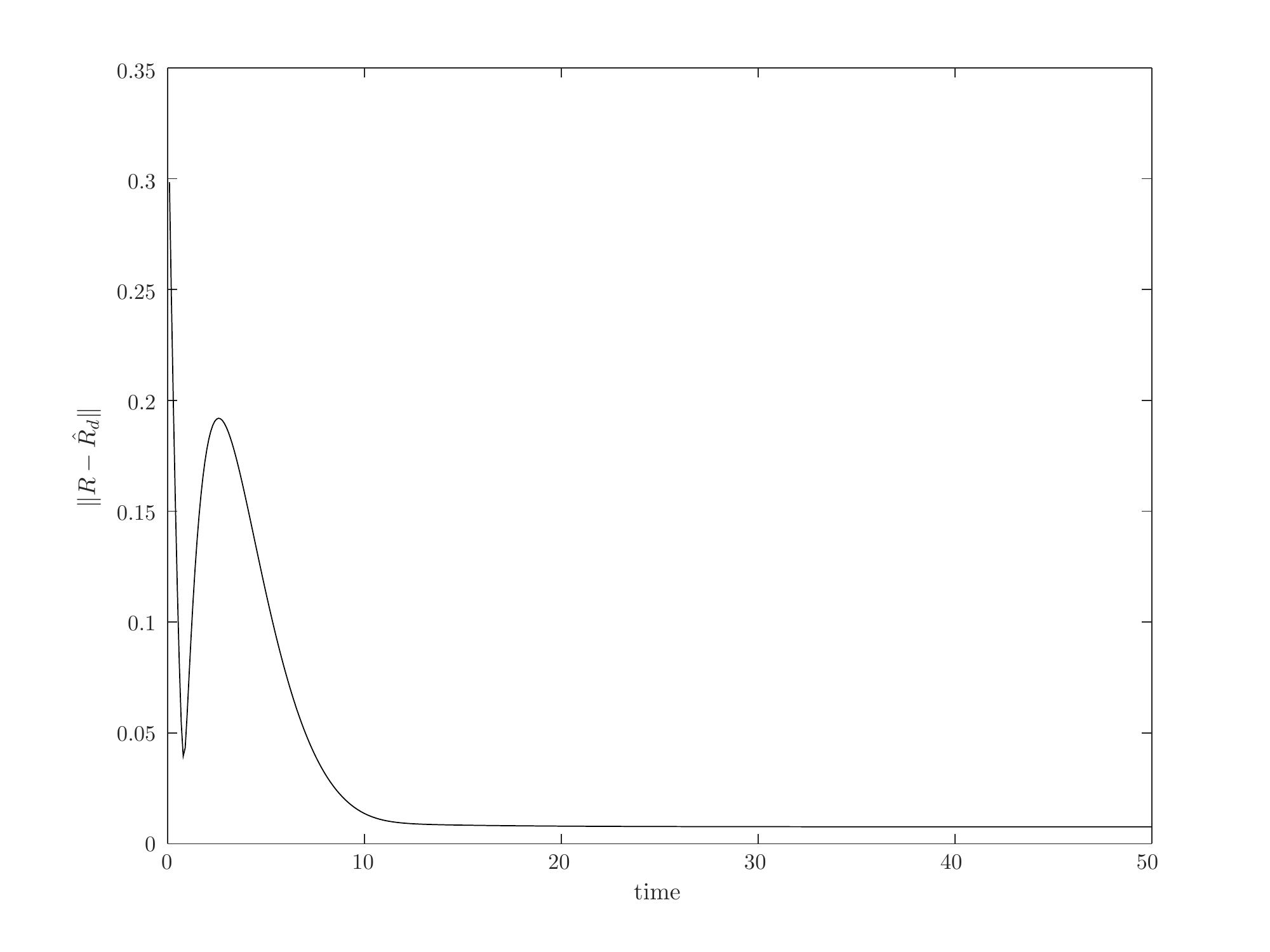}
	\caption{Taylor discretized predictor step\label{fig:taylor}}
\end{figure}

	\iftoggle{arxiv}{}{
	\section{Alternate Applications}
		\label{sec:duality}
		% PATH PLANNING == CONTROLLER DESIGN == OBSERVER DESIGN{{{
Although the proposed algorithm targets an estimation problem, the same approach can be used to design a controller or to solve a path planning problem.

\subsection{Controller design: Stabilization}
The notion of controller design for stabilization is that the system is to be stabilised about a certain equilibrium point. Assume that the system to be stabilised is given by 
\begin{align}
	\dot{R} = R\Omega_\times\label{eq_controller_system}
\end{align}
where $R$ is the state and $\Omega$ here is the control input. 
Further, we have measurements at discrete instants of time as
\begin{align*}
	R^y_k = R(k\dt)
\end{align*}
Assume that we wish to stabilise this system about $p \in \SO3$ in a discrete setting. So we compute the control satisfying the discrete equation (source: \cite{phogat})
\begin{align*}
	R_{k+1} = R_{k}\Omega_c
\end{align*}
Our philosophy in designing such a control is to make the system \pref{eq_controller_system}
track the following system given by the state and input two-tuple
 $(R_1, \Omega_1)$, and  which has the state update laws:
\begin{align*}
	R_1 & = p\\
	\Rightarrow \dot{R}_1 & = 0 = R_1
	\begin{pmatrix}0\\0\\0
	\end{pmatrix}_\times\label{eq_controller_dummy_system}\numberthis
\end{align*}

Since the latter system is the constant $(p, 0)$, it follows from the observer results of this report that $R$ will eventually converge to the point $p$. We now show the modification required to do so:

We can modify the designed observer in the following way to stabilize it about the given point.
From the dynamics, it follows that $\Omega_1$ is $0$ at all times.
Similarly, the initial state of the system \pref{eq_controller_dummy_system} can be taken to be $p$ and the measurement of the system at the $k^{th}$ epoch can be taken to be $p$.
This leads to a controller which is stabilised around the specified point.
This translates to the observer system being modified as

The predictor step:  $t \in [(k-1)\dt, k \dt[$
\begin{align*}
	\hat{R}_d(k\mid k-1) = \hat{R}_d(k-1 \mid k-1)\exp(-\hat{b}_d(k-1)_\times\dt), \quad \hat{R}_d(0 \mid 0) = R_0
\end{align*}
now translates to
\begin{align*}
	x_d &= \exp(-\hat{b}_d(k-1)_\times\dt)
\end{align*}
The corresponding corrector step terms at $t = k\dt$ are
\begin{align*}
	\omega_k &= vex(\mathbb{P}_a(\hat{R}_d(k\mid k-1)^Tp))
\end{align*}
which translates to 
\begin{align*}
    \omega_k = vex(\mathbb{P}_a(x_d^T(R^y_k)^Tp))
\end{align*}
The term
\begin{align*}
	\begin{split}
        &\hat{R}_d(k\mid k) = \hat{R}_d(k\mid k-1) + \hat{R}_d(k\mid k-1)k_p\omega_{k_\times}\dt \\&\quad- k_e\hat{R}_d(k\mid k-1)(\hat{R}_d(k\mid k-1)^T\hat{R}_d(k\mid k-1) - I)\dt  
	\end{split}
\end{align*}
translates to 
\begin{align*}
    \Omega_c(k) &= x_d[I + k_p\omega_{k_\times}\dt - k_e(x_d^T(R^y_k)^TR^y_kx_d - I)\dt]
\end{align*}
The term
\begin{align*}
	\hat{b}_d(k) &= \hat{b}_d(k-1) + k_b \omega_k\dt, \quad \quad \quad \quad \quad b_d(0) = 0
\end{align*}
remains the same.

We use this to calculate $\Omega_c$, the controlled input to the system.
However, to account for the measurement to be taken at the $k$th instant, we rewrite the equations in the form given below.
The algorithm to calculate $\Omega_c$ will be
\begin{align*}
	\hat{b}_d(k) &= \hat{b}_d(k-1) + k_b \omega_k\dt\\
    x_d &= \exp(-\hat{b}_d(k)_\times\dt)\\
	\omega_k &= vex(\mathbb{P}_a(x_d^T(R^y_k)^Tp))\\
	\Omega_c(k) &= x_d[I + k_p\omega_{k_\times}\dt - k_e(x_d^T(R^y_k)^TR^y_kx_d - I)\dt]
\end{align*}
Due to the designed observer converging to the system dynamics \pref{eq:sys}, this controller ensures that the state converges to the required equilibrium point exponentially.

\subsection{Path Planning}
Similar to controller design, the same observer equations can be modified to follow a path.
We assume that the system $\dot{R} = R \Omega_{\times}$ should follow a certain 
desired path $f(t) \in \SO3$ and we wish to obtain the $\Omega(\cdot)$ trajectory to
do so. We proceed as follows. Differentiating we obtain
\begin{align*}
	\dot{R}(t) = \dot{f}(t) = f(t)f^T(t)\dot{f}(t) = R(t)f^T(t)\dot{f}(t) = R(t)g(t)
\end{align*}
where $g = f^T\dot{f}$.
Since this equation is in the form we have for our system, we choose $\Omega_k^y$ based on the value of $g$ at the $k^{th}$ epoch. This enables us to use our designed observer as a path planning controller.
The path planning controller equations then are:

Predictor step:  $t \in [(k-1)\dt, k \dt[$
\begin{align*}
	\hat{R}_d(k\mid k-1) = \hat{R}_d(k-1\mid k-1)\exp(\hat{\Omega}_d(k-1)_\times\dt)
\end{align*}

Corrector step: at $t = k\dt$
\begin{align*}
	\omega_k &= vex(\mathbb{P}_a(\hat{R}_d(k\mid k-1)^TR^y_k))\\
	\begin{split}
		\hat{R}_d(k\mid k) &= \hat{R}_d(k\mid k-1) + \hat{R}_d(k\mid k-1)k_p\omega_{k_\times}\dt \\&\quad- k_e\hat{R}_d(k\mid k-1)(\hat{R}_d(k\mid k-1)^T\hat{R}_d(k\mid k-1) - I)\dt
	\end{split}\\
	\hat{b}_d(k) &= \hat{b}_d(k-1) + k_b \omega_k\dt\\
	\hat{\Omega}_d(k) &= g(k\dt) - \hat{b}_d(k)
\end{align*}
Hence, we have that for small enough $\dt$, for all $k \in \mathbb{N} > M$,
\begin{align*}
    \| \hat{R}_d(k \mid k) - f(k\dt) \| < \epsilon
\end{align*}

		}

	\section{Acknowledgements}
	    Soham Shanbhag would like to acknowledge Prof. Debasish Chatterjee and Prof. Srikant Sukumar, and his colleague, Mishal Assif P K, for their inputs in this work.
The authors would also like to thank D. H. S. Maithripala for his suggestions regarding the estimator design, and Aseem V. Borkar, for his help  in collecting the data for the experiment.

	\appendix

	\section{Feedback Integrators}\label{app:FI}
\begin{theorem}\label{theo:FI}\cite{Chang_FI}
	Consider a dynamical system on an open subset U of $\mathbb{R}^n$:
    \begin{align}
        \dot{x} = X(x),\label{eq:FI_system}
    \end{align}
    where X is a $C^1$ vector field on U.
    Let us make the following assumptions:
    \begin{enumerate}
        \item[A1.] There is a $C^2$ function $V: U \to \mathbb{R}$ such that $V(x) \geq 0$ for all $x \in U, V^{-1}(0) \neq \phi$, and
        \begin{align}
            \nabla V(x)\cdot X(x) = 0
        \end{align}
        for all $x \in U$.
        \item[A2.] There is a positive number c such that $V^{-1}([0, c])$ is a compact subset of U.
        \item[A3.] The set of all critical points of V in $V^{-1}([0,c])$ is equal to $V^{-1}(0)$.
    \end{enumerate}
    Then, for the system
    \begin{align}
        \dot{x} = X(x) - \nabla V(x)\label{eq:modified_FI_system}
    \end{align}
    every trajectory starting from a point in $V^{-1}([0,c])$ stays in $V^{-1}([0,c])$ for all $t \geq 0$ and asymptotically converges to the set $V^{-1}(0)$ as $t \to \infty$.
    Furthermore, $V^{-1}(0)$ is an invariant set of both \pref{eq:FI_system} and \pref{eq:modified_FI_system}.\\
    It should be noted that both the vector fields coincide on $V^{-1}(0)$.
\end{theorem}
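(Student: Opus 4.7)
The plan is to use $V$ itself as a Lyapunov function for the modified vector field $X - \nabla V$, then appeal to LaSalle's invariance principle. The orthogonality condition in A1 is precisely what makes this work: along the modified dynamics,
\begin{align*}
\dot V = \nabla V(x)\cdot\bigl(X(x) - \nabla V(x)\bigr) = \nabla V(x)\cdot X(x) - \|\nabla V(x)\|^2 = -\|\nabla V(x)\|^2 \leq 0,
\end{align*}
where the first term vanishes by A1. So $V$ is nonincreasing along trajectories of the modified system.

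From this monotonicity I would deduce forward invariance of the sublevel set $V^{-1}([0,c])$: if $V(x(0)) \leq c$, then $V(x(t)) \leq c$ for all $t \geq 0$, and $V\geq 0$ prevents escape downward. By A2 this sublevel set is compact in $U$, so the trajectory exists for all $t \geq 0$ and cannot leave $U$. This already gives the first conclusion of the theorem.

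For the asymptotic conclusion I would apply LaSalle's invariance principle on the compact positively invariant set $V^{-1}([0,c])$. The $\omega$-limit set $\omega(x_0)$ is nonempty, compact, invariant under the modified flow, and contained in the largest invariant subset of $\{x \in V^{-1}([0,c]) : \dot V(x) = 0\} = \{x \in V^{-1}([0,c]) : \nabla V(x) = 0\}$. Assumption A3 identifies this set of critical points with $V^{-1}(0)$, so $\omega(x_0) \subset V^{-1}(0)$, which is exactly the claimed asymptotic convergence to $V^{-1}(0)$.

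Finally, for the invariance of $V^{-1}(0)$ under both flows and the coincidence of vector fields on it: since $V\geq 0$ and $V^{-1}(0)$ is the zero set, every point of $V^{-1}(0)$ is a global minimum of $V$, hence $\nabla V = 0$ there. So $X - \nabla V = X$ on $V^{-1}(0)$, giving the last sentence of the theorem. Invariance under the original system then follows because $\dot V = \nabla V \cdot X = 0$ along it (by A1), so $V$ is constant, and in particular $V(x(0))=0$ forces $V(x(t))=0$; invariance under the modified system follows from $\dot V = -\|\nabla V\|^2 = 0$ on $V^{-1}(0)$ (or directly from the coincidence of the two vector fields there). The main subtlety, and arguably the only nontrivial step, is justifying LaSalle on a compact positively invariant set — which is precisely what assumptions A1 and A2 are engineered to supply, with A3 ensuring the limit set cannot contain spurious critical points away from $V^{-1}(0)$.
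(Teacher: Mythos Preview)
Your argument is correct and is exactly the standard Lyapunov/LaSalle proof one would expect for this result. Note, however, that the paper does not supply its own proof of this theorem: it is quoted verbatim from \cite{Chang_FI} and stated without proof in Appendix~\ref{app:FI}. The closest the paper comes is the discrete analogue in Appendix~\ref{app:disc_conv}, where the same mechanism you use (compute $V(x_{k+1}) - V(x_k) \leq -h\|\nabla V(x_k)\|^2 + o(h^2)$ via A1, then invoke discrete LaSalle on the compact sublevel set from A2, with A3 pinning the limit set to $V^{-1}(0)$) is spelled out. So your proposal is the natural continuous-time counterpart of what the authors do in the discrete setting, and there is nothing to correct.
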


	\section{Convergence of the discretization of the modified dynamical system}\label{app:disc_conv}

The authors in \cite{Chang_FI} claim that the discrete time dynamical system derived as a one step integrator from a continuous time dynamical system extended by the Feedback Integrator to the ambient Euclidean space preserves the Manifold and the first integrals of the system. However, the proof of this claim is not shown. For the requirements of this project, this is important. Hence, we present a proof of the same here.

Consider the continuous time dynamical system
\begin{align}
    \dot{x} = X(x), \quad x \in U
\end{align}
where $U$ is an open subset of $\R^n$.\\

The extension of this system to the ambient Euclidean space is given by
\begin{align}
    \dot{x} = X(x) - \nabla V(x), \quad x \in \R
\end{align}
where $V: U \to \R$ is a $C^2$ function following the following assumptions:
\begin{itemize}
\item $V(x) \geq 0 ~ \forall ~ x \in U$, $V^{-1}(0) \neq \phi$
\item $\nabla V(x)\cdot X(x) = 0 ~ \forall ~ x \in U$
\item $\exists ~ c > 0$ such that $V^{-1}([0,c])$ is a compact subset of $U$
\item The set of all critical points of V in $V^{-1}([0,c])$ is equal to $V^{-1}(0)$
\end{itemize}
We consider the discretized system under any scheme as follows:
\begin{align}
    x_{k+1} = x_{k} + hf(x_k) - h\nabla V(x_k)\label{eq:disc_FI}
\end{align}
\begin{claim}: For a sufficiently small $h$ , the system \pref{eq:disc_FI} preserves first integrals and the manifold.
\end{claim}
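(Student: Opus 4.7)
The strategy is to use $V$ itself as a discrete-time Lyapunov function for the update \pref{eq:disc_FI}, and to establish three things in sequence: that the iterates stay inside the compact sublevel set $V^{-1}([0,c])$, that they are eventually confined to an arbitrarily small neighborhood of $V^{-1}(0)$, and that any first integral of $X$ is perturbed by a vanishing amount once the iterates are close to the manifold. This is the sense in which the manifold and the first integrals are "preserved" by the one-step integrator.

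The first step is a second-order Taylor expansion of $V(x_{k+1})$ about $x_k$, legitimate by the $C^{2}$ regularity of $V$:
\begin{align*}
V(x_{k+1}) = V(x_k) + h\,\nabla V(x_k)^{T}\bigl(X(x_k)-\nabla V(x_k)\bigr) + \tfrac{h^{2}}{2}\bigl(X(x_k)-\nabla V(x_k)\bigr)^{T} H_V(\xi_k)\bigl(X(x_k)-\nabla V(x_k)\bigr),
\end{align*}
for some $\xi_k$ on the segment $[x_k,x_{k+1}]$. Assumption A1 kills the cross term $\nabla V\cdot X$, which is the crucial algebraic gain from the feedback-integrator construction, and leaves the one-step inequality $V(x_{k+1}) \leq V(x_k) - h\|\nabla V(x_k)\|^{2} + Mh^{2}$, where $M$ is a uniform bound on $\tfrac{1}{2}\|X-\nabla V\|^{2}\|H_V\|$ on the compact set $V^{-1}([0,c])$ (assumption A2). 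Next I would trap the iterates: pick $h$ small enough that a single displacement of length $h\|X-\nabla V\|$ from any point of $V^{-1}([0,c/2])$ cannot leave $V^{-1}([0,c])$ and that $Mh\leq c/4$; a standard induction then keeps the iterates inside $V^{-1}([0,c])$ indefinitely, so the Taylor bound remains valid for every $k$.

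The third step exploits the Lyapunov decrease. Whenever $\|\nabla V(x_k)\|^{2} > 2Mh$ the one-step inequality reads $V(x_{k+1}) \leq V(x_k) - \tfrac{h}{2}\|\nabla V(x_k)\|^{2}$, so $V$ strictly decreases by a uniform amount, forcing the iterates to enter the set $S_h \deff \{x\in V^{-1}([0,c]):\|\nabla V(x)\|^{2}\leq 2Mh\}$ in finitely many steps. By assumption A3 together with compactness, the only critical points of $V$ in $V^{-1}([0,c])$ lie on $V^{-1}(0)$, so $S_h$ is a shrinking neighborhood of $V^{-1}(0)$ as $h\to 0$. For a first integral $I$ of $X$ (so $\nabla I\cdot X\equiv 0$), an identical Taylor argument applied to $I$ yields
\begin{align*}
I(x_{k+1}) - I(x_k) = -h\,\nabla I(x_k)^{T}\nabla V(x_k) + O(h^{2}).
\end{align*}
Once the trajectory has entered $S_h$, the first-order term is bounded by $h\|\nabla I\|\sqrt{2Mh} = O(h^{3/2})$, and the total drift in $I$ over any fixed horizon $k\leq T/h$ is $O(\sqrt{h})$, which vanishes as $h\to 0$.

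The main obstacle I anticipate is the third step: to conclude that $S_h$ actually collapses to $V^{-1}(0)$ as $h\to 0$ one needs more than A3 by itself. The statement is automatic if $V$ grows non-degenerately in the normal direction to $V^{-1}(0)$, which holds for the potential $V = \|R^{T}R-I\|^{2}$ used in this paper and, more generally, from a \L ojasiewicz-type inequality that A3 plus compactness provides in the real-analytic case. I would either invoke this normal non-degeneracy explicitly, or weaken the conclusion to the qualitatively equivalent statement "the iterates enter any prescribed open neighborhood of $V^{-1}(0)$ once $h$ is sufficiently small", which is all the discretised observer in \pref{eq:discrete_observer} really uses.
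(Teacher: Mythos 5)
Your proof follows the same central mechanism as the paper's: Taylor-expand $V$ along the one-step update, use assumption A1 to annihilate the cross term $\nabla V\cdot X$, and obtain the descent relation $V(x_{k+1})-V(x_k)=-h\|\nabla V(x_k)\|^{2}+O(h^{2})$. Where you genuinely diverge is in the endgame, and your version is the more careful one. The paper asserts that the right-hand side is $\le 0$ for sufficiently small $h$ and then invokes a discrete LaSalle principle; but that sign claim cannot hold uniformly, since near $V^{-1}(0)$ the gradient term degenerates and the $O(h^{2})$ remainder dominates, which is precisely why the paper has to retreat to the vague statement that the limit set is some $V^{-1}([0,\epsilon(h)])$. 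You instead make the remainder constant $M$ explicit, add a trapping induction to keep the iterates in $V^{-1}([0,c])$ (which the paper silently assumes), and correctly identify that strict decrease only holds outside the set $S_h=\{\|\nabla V\|^{2}\le 2Mh\}$, so the honest conclusion is confinement to a shrinking neighborhood of $V^{-1}(0)$ rather than convergence to it. You also supply the first-integral drift estimate $I(x_{k+1})-I(x_k)=-h\,\nabla I\cdot\nabla V+O(h^{2})=O(h^{3/2})$ per step, which the paper states as a conclusion but never argues. Your flagged caveat is real: A3 alone guarantees that $\{\|\nabla V\|^2\le 2Mh\}$ is contained in arbitrarily small neighborhoods of $V^{-1}(0)$ as $h\to 0$ (by compactness and continuity of $\nabla V$), but quantifying the rate $\epsilon(h)$ does require the normal non-degeneracy or \L ojasiewicz-type condition you mention; for the specific potential $V=\|R^{T}R-I\|^{2}$ used here that condition holds, so nothing is lost for the application. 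In short: same skeleton, but your proof repairs the sign argument the paper glosses over and proves the first-integral claim the paper only asserts.
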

\begin{proof}%{{{
    Consider the Lyanpunov like function $V(x_k)$ where $x_k \in V^{-1}([0,c])$..
    \begin{align}
        V(x_{k+1}) &= V(x_k + hf(x_k) - h\nabla V(x_k))\\
        &= V(x_k + h(f(x_k) - \nabla V(x_k)))
    \end{align}
    Using Taylor series expansion of $V$ at $x_k$, we get
    \begin{align}
        V(x_{k+1}) &= V(x_k) + \frac{\nabla V(x_k)}{1!}\cdot h(f(x_k) - \nabla V(x_k)) + o(h^2)\\
        &= V(x_k) - h|\nabla V(x_k)|^2 + o(h^2)
    \end{align}
    Hence,
    \begin{align}
        V(x_{k+1}) - V(x_k) = -h|\nabla V(x_k)|^2 + o(h^2) \leq 0 
        \quad \hbox{for sufficiently small $h > 0$} \label{eq:disc_FI_error}
    \end{align}
    Hence, $V^{-1}([0,c])$ is a positively invariant set of \pref{eq:disc_FI}. Since $V^{-1}([0,c])$ is also compact by assumption, hence discrete time LaSalle's Invariance Principle \cite{BulloDiscreteLasalla} can be applied to the system. Hence, the system converges to the largest invariant subset of $E = \{x \in V^{-1}([0,c]) | V(x_{k+1}) - V(x_k) = 0\}$.
    As is visible from equation \pref{eq:disc_FI_error}, this is dependent on $h$. Hence, we can say that $E = V^{-1}([0,\epsilon(h)])$.
    Hence, the discretized system converges to an epsilon-neighbourhood around the dynamics where the first integrals and the manifold are conserved, and the size of this neighbourhood can be defined by $h$.
\end{proof}%}}}

	\bibliographystyle{amsalpha}
	\bibliography{ref}

	\bigskip

\end{document}